\documentclass{article}
\usepackage{graphicx} 
\usepackage{amsthm,amssymb,amsfonts,amsmath}
\usepackage{fullpage}
\usepackage{enumitem}
\usepackage{hyperref}
\usepackage{comment}
\usepackage{xspace}
\usepackage{url}
\usepackage{hyperref}
\usepackage{cleveref}
\usepackage[usenames,dvipsnames]{xcolor}

\newtheorem{theorem}{Theorem}
\newtheorem{lemma}[theorem]{Lemma}
\newtheorem{observation}[theorem]{Observation}
\theoremstyle{definition}
\newtheorem{definition}[theorem]{Definition}

\newcommand{\nestingbirdbox}{\textsc{Nesting Bird Box}\xspace}
\newcommand{\artgallery}{\textsc{Art Gallery}\xspace}

\newcommand{\boxx}{\text{birdhouse}\xspace}
\newcommand{\boxxes}{\text{birdhouses}\xspace}
\newcommand{\Vis}{\ensuremath{\mathrm{Vis}}}

\newcommand{\Z}{\ensuremath{\mathbb{Z}}\xspace}
\newcommand{\R}{\ensuremath{\mathbb{R}}\xspace}
\newcommand{\ER}{\ensuremath{\exists\mathbb{R}}\xspace}
\newcommand{\NP}{\ensuremath{\text{NP}}\xspace}
\newcommand{\PSPACE}{\ensuremath{\text{PSPACE}}\xspace}
\newcommand{\Polytime}{\ensuremath{\text{P}}\xspace}

\title{The Nesting Bird Box Problem is \ER-complete:
\\ Sharp Hardness Results for the Hidden Set Problem}

\author{
  Lucas Meijer\thanks{L.~Meijer was generously supported by the Netherlands Organisation for Scientific Research (NWO) under project no. VI.Vidi.213.150.} \\
  Utrecht University \\
  \texttt{l.meijer2@uu.nl}
  \and
  Tillmann Miltzow\thanks{T.~Miltzow was generously supported by the Netherlands Organisation for Scientific Research (NWO) under project no. VI.Vidi.213.150.} \\
  Utrecht University \\
  \texttt{t.miltzow@uu.nl}
  \and
  Johanna Ockenfels \\
  ETH Zürich\\
  \texttt{jockenfels@ethz.ch}
  \and
  Milo\v{s} Stojakovi\'{c}\thanks{M.~Stojakovi\'c was partly supported by the Science Fund of the Republic of Serbia, Grant \#7462: Graphs in Space and Time: Graph Embeddings for Machine Learning in Complex Dynamical Systems (TIGRA), and partly supported by the Ministry of Science, Technological Development and Innovation of the Republic of Serbia (grants 451-03-33/2026-03/200125 \& 451-03-34/2026-03/200125).} \\
  Department of Mathematics and Informatics, \\ Faculty of Sciences, University of Novi Sad, Serbia \\
  \texttt{milosst@dmi.uns.ac.rs}
}

\begin{document}

\maketitle
\begin{abstract}
    In the (Nesting) Bird Box Problem we are given a polygonal domain $P$ and a number $k$ and we want to know if there is a set $B$ of $k$ points inside $P$ such that no two points in  $B$ can see each other.
    The underlying idea is that each point represents a \boxx and many birds only use a \boxx if there is no other occupied \boxx in its vicinity.     We say two points $a,b$ see each other if the open segment $ab$ intersects neither the exterior of $P$ nor any vertex of $P$.

    We show that the \nestingbirdbox problem is \ER-complete.
    The complexity class \ER can be defined by the set of problems that are polynomial time equivalent to finding a solution to the equation
    $p(x) = 0$, with $x\in \R^n$ and $p\in \Z[X_1,\ldots,X_n]$.
    The proof builds on the techniques developed in the original  \ER-completeness proof of the \artgallery problem.
    However our proof is significantly shorter for two reasons.
    First, we can use recently developed tools that were not available at the time.
    Second, we consider polygonal domains with holes instead of simple polygons.
        
\end{abstract}

\vfill

\begin{center}
    \includegraphics[width=0.3\textwidth]{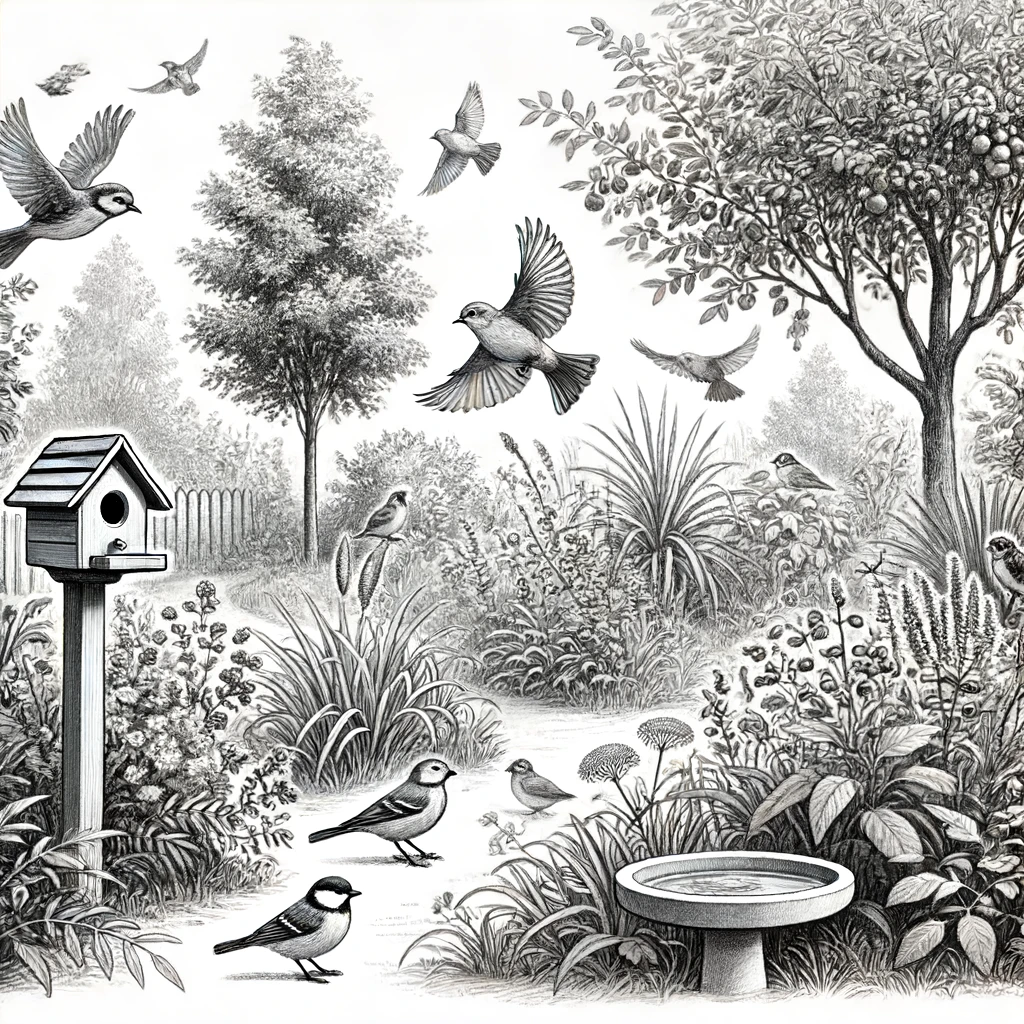}
    
    Many birds enjoy life in a beautiful garden.
    (Generated by ChatGPT.)
\end{center}

\vfill

\newpage
\section{Introduction}
\textbf{Tangible Example.}
    It is March, and a magnificent magnolia tree is showcasing its white flowers in a beautiful garden in the countryside of Denmark. Only birds could make this paradise even better. After a quick discussion, the decision is made: bird feeders are placed, a bird bath is installed, and nesting \boxxes are built.
However, the question remains: where should the nesting \boxxes be located?
The garden is expansive, with many potential sites for the \boxxes. An important consideration is that many birds will not nest in a \boxx if they can see another \boxx. 
(Although cute, birds are very territorial.)
Therefore, to maximize the number of birds taking shelter in the garden, we must ensure that no two \boxxes are within each other's line of sight.

In this paper, we investigate how to install the largest number of \boxxes such that no two can see each other, approached from a computer science perspective.
We show that the \nestingbirdbox problem is \ER-complete.
A problem is \ER-complete if it is polynomial time equivalent to solving the equation $p(x) = 0$, with $x\in \R^n$ and $p$ being a polynomial with integer coefficients.

\textbf{Motivation.}
One of the most popular problems in computational geometry is the \artgallery problem. 
The \artgallery problem is popular due to its appealing metaphor and many variants, and the beautiful proof by Steve Fisk~\cite{Fisk78a}.
In the \artgallery problem, we are working on the dominating set problem in the visibility graph of polygons.
The \nestingbirdbox problem is the independent set problem on the same graph class.
From a mathematical perspective both dominating set and independent set are (maybe equally) natural and important.
Another contribution of this paper is to popularize the \boxx metaphor for the independent set problem of the visibility graph, which to the best of our knowledge comes from Mikkel Abrahamsen.
In this work, we settle the algorithmic complexity of the \nestingbirdbox problem on polygonal domains.

\subsection{Definition and Results.}
A polygonal domain $P$ is a closed subset of the plane $\R^2$, without curved boundaries.
\begin{definition}[\nestingbirdbox Problem]
    Given a polygonal domain $P$ and an integer $k$, are there $k$ points within $P$ such that no two of the $k$ points are visible to each other?
\end{definition}
Typically, visibility is defined as follows: two points $a,b\in P$ are visible if and only if 
the segment $ab$ is fully contained in the polygonal domain $P$.
We refer to this version of visibility as \textit{ordinary visibility}.
In our case, we say two points $a,b\in P$ see each other if the open segment
$ab$ is contained in the polygonal domain and contains no vertex of $P$.
We refer to this version of visibility as \textit{open-visibility}. The idea behind open-visibility is that visibility regions do not contain their boundary fully.
(The \textit{visibility region} of a point $a$, denoted by $\Vis(a)$, is the set that can be seen by $a$, i.e., $ \{b \in P : a \text{ is visible to } b\}$.)
Open-visibility is less common than ordinary visibility. 
At the same time, our proofs crucially rely on this variant of visibility.
This is comparable to how the \ER-hardness of the \artgallery problem crucially relies on boundary conditions of ordinary visibility~\cite{S23c,AAM22}.
This definition of visibility is comparable how \textsc{Graph In Polygon} relates to the \textsc{Partial Drawing Extensibility}~\cite{LMM22}.
The given polygon in the \textsc{Graph In Polygon} can be seen as a pre-drawn graph.
However, in \textsc{Graph In Polygon}, the authors allowed newly drawn edges of the graph to lie on top of edges of pre-drawn graph, and for vertices of the pre-drawn graph to lie on the newly drawn parts.

We are now ready to state our main result.
\begin{theorem}\label{thm:birds}
    The \nestingbirdbox problem is \ER-complete on polygonal domains.
\end{theorem}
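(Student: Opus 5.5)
We prove the two directions separately: membership in \ER, then \ER-hardness.

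\textbf{Membership.} We would not write a direct first-order formula. Instead we would use the real verification characterization of \ER: a problem is in \ER if a polynomial-size witness consisting of real numbers can be checked by a real RAM in polynomial time. The witness is the list of $k$ points $b_1,\dots,b_k$. The verifier performs two checks.
\begin{itemize}
\item For each $b_i$, it checks $b_i\in P$ using orientation tests against the edges of $P$.
\item For each pair $b_i,b_j$, it checks that they do not see each other. Open-visibility fails iff the open segment $b_ib_j$ leaves $P$ or passes through a vertex of $P$.
\end{itemize}
The second check is decided in polynomial time by a constant number of sign tests of polynomials in the coordinates, done for each edge and each vertex of $P$. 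To test whether the open segment leaves $P$, it suffices to sort the intersection parameters of $b_ib_j$ with the edges of $P$. Then test, via the midpoint of each consecutive subinterval, whether that subinterval lies in the exterior. All of this is polynomial arithmetic, so the problem lies in \ER.

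\textbf{Hardness.} We would reduce from a suitable \ER-complete variant of \textsc{ETR}, the one used in recent streamlined \ER-hardness proofs for the \artgallery problem. This is ETR-INV restricted to a bounded range: variables in $[1/2,2]$ and constraints $x+y=z$ and $x\cdot y=1$. Following the original \artgallery architecture, we encode each variable $x_i$ by a \emph{variable gadget}. This is a narrow pocket of the domain in which exactly one \boxx can be placed without being seen from elsewhere, and its position along a short segment encodes the value of $x_i$. Each gadget is copied several times, and the copies are connected by long thin corridors. The holes are placed so that a \boxx at position $x$ in one pocket blocks, via visibility, exactly the region corresponding to values different from $x$ in a companion pocket. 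Copying is then realized because the companion's \boxx is forced to sit at the unique unseen point. Open-visibility is what creates these unique unseen points: a vertex lying on the line of sight blocks it, so the shadow boundaries are points the \boxx may occupy.

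\textbf{Constraint gadgets.} Addition and inversion gadgets would be built as pockets viewed from two or three incoming corridors. In such a pocket the set of points not seen by the \boxxes encoding $x$ and $y$ is nonempty exactly when the constraint holds. The key device is that a point's unseen region is bounded by lines through vertices. The intersection of two such lines, each depending on an input value, then yields linear behaviour for addition. For inversion we would use a hyperbola arising from lines through two fixed reflex vertices, after a projective reparametrization of the variable segments, as in the \artgallery proof. We set $k$ to the total number of pockets. Every pocket must then host a \boxx, and pockets are designed so that no two \boxxes can go in one pocket. For completeness and soundness we check that any solution with $k$ \boxxes places one per pocket at positions satisfying all constraints, and conversely.

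\textbf{Main obstacle.} The hard step is engineering the gadgets so that no spurious interactions occur. Corridors must be narrow and holes must block unintended lines of sight, while \boxxes in unconstrained areas of the domain must still be impossible to place. Since we allow holes, we would enclose each gadget so that only the designed sightlines survive, which keeps the bookkeeping local. We would also use the ETR-INV range restriction to keep all coordinates polynomially bounded rationals.
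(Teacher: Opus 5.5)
\textbf{Verdict: genuine gaps in the hardness part.} Your membership argument is fine and is essentially the paper's real-RAM verification. The hardness part, however, is a plan whose concrete mechanisms either fail or are missing.

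\textbf{Copying cannot work as stated.} A single birdhouse $b$ can never leave exactly one unseen point $c$ in the relative interior of a segment $S$ while seeing the points of $S$ on both sides of $c$. Meeting the exterior of $P$ is an open condition. So if the open segment $bc$ met the exterior, so would $bc'$ for $c'$ near $c$, and those points would be unseen too. Hence $bc\subseteq P$, and $c$ is unseen only because $bc$ passes through a vertex $v$ of $P$. But the open segments $bc'$, for $c'\in S$ near $c$ on both sides, lie in $P$ and sweep a neighbourhood of $v$ minus the line $bc$. Since $P$ is closed, $v$ would then be an interior point of $P$, which is absurd for a vertex. A vertex therefore only produces a one-sided closed shadow. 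This is why the paper never forces one birdhouse directly by another. It splits every equality into a $\le$-gadget and a $\ge$-gadget; in each, an \emph{extra} birdhouse fits on a critical segment (or region) iff the inequality holds, and the threshold count forces both to be occupied.

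\textbf{The counting step is unjustified.} Setting $k$ to the number of pockets does not imply that every pocket hosts a birdhouse. Two things are missing.
\begin{itemize}
\item An upper bound: a cover of the \emph{whole} domain, corridors and connecting regions included, by exactly $k$ fully visible regions.
\item An exchange argument: all non-variable birdhouses may be assumed to sit at fixed spectator positions whose visibility regions cover everything except the designated segments.
\end{itemize}
Without the exchange argument, a solution may place pocket birdhouses elsewhere, leaving room for variable birdhouses off their segments, so positions no longer encode values. This is exactly what you call the ``main obstacle,'' and it is the core of the paper. The paper's tools are:
\begin{itemize}
\item fully visible regions (a closed convex $C\subseteq P$ is fully visible iff no edge of $C$ contains a vertex of $P$ in its interior);
\item vision-minimal and relatively vision-minimal spectators, which are essentially fixed;
\item blockers with a spectator in their nook;
\item triangular pockets along $e_{\texttt{left}}$;
\item per-gadget thresholds.
\end{itemize}

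\textbf{The inversion gadget is missing.} Reducing from ETR-INV obliges you to build an exact $x\cdot y=1$ gadget, which you do not construct. A projective reparametrization of some variable segments would also clash with the affine encoding your copy and addition gadgets rely on. The paper sidesteps this by reducing from ETR$(h)$ of Miltzow and Schmiermann, where any non-affine M\"obius $h$ suffices. Two central projections through pivots at different heights then automatically yield such an $h$.
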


The main contribution is to settle the algorithmic complexity of this problem. 
Our proofs turn out to be significantly simpler than the 
proof that the \artgallery problem is \ER-complete~\cite{S23c,AAM22}.



In order to understand the full impact and the shortcomings
of our results, we will discuss it from different perspectives in the
following section.

\subsection{Discussion}
\textbf{Visibility Definition.}
Possibly the biggest shortcoming of Theorem~\ref{thm:art} is that it relies on a, mathematically speaking, somewhat unusual definition of visibility. 
In our proofs, it is crucial to be able to enforce a \boxx to lie on a specific line segment.
If we were to use the ordinary definition of visibility, every optimal placement of \boxxes could be slightly perturbed and the solution would still be valid.
In other words, the solution space would be open.

We want to remark that the \ER-hardness proof of the \artgallery problem also strictly relies on the closedness of the solution space, which allows us to enforce guards to lie on a specific segment in a similar way.
So it is also an open research question what is the algorithmic complexity of the \artgallery problem with our definition of visibility.
Similar problems appear when we consider drawing a graph in a polygonal domain~\cite{lubiw2018complexity}.

Another remark is that changing the notion of visibility in the context of the \artgallery problem is a common phenomenon,
and in comparison our modification of visibility is arguably subtle.
    

\textbf{Similarity to \artgallery Proof.}
We want to point out that many parts of this proof are 
reminiscent of 
the \ER-hardness of the \artgallery problem.
This is not a complete coincidence; many gadgets can be easily 
adapted to work for the \nestingbirdbox problem. 

\textbf{Proof Simplicity.}
The biggest strength of this paper is the simplicity of our proofs.
Specifically, the proof is easy enough to be taught in a graduate
course in Computational Geometry, compared to the length of the original paper 
by Abrahamsen, Adamaszek and Miltzow.
There are three main reasons for this simplification.

\begin{itemize}
    \item We can reduce from continuous constraint satisfaction problems, which did not exist at the time.
    It gives much more flexibility and is easy to apply. Specifically, we do not need to find a gadget encoding $x\cdot y = 1$.
    \item We consider polygonal domains, i.e., polygons with holes. 
    This simplifies our proofs significantly as we can easily isolate gadgets from
    one another, by simply adding holes between different gadgets in the appropriate way.
    Copying variables to isolated locations was the main technical difficulty
    in the original paper.
    \item At last, we learned from previous experiences in the recent literature and streamlined the arguments.
\end{itemize}

\textbf{Relevance of \ER.}
It is natural to ask what is the added benefit of knowing that a problem is \ER-complete if we already
know that the problem is NP-hard.
There are several answers to this. 
The first one is that we are interested in this question purely from a mathematical perspective
and having a complete classification of the difficulty of this natural problem is desirable.
Another answer is that it helps calibrate our expectation management in terms of algorithmic
results that we can expect.
For example, if we want to solve the \nestingbirdbox problem using integer linear programming, we need to make concessions on their performance. 
There are some instances of the \nestingbirdbox problem that cannot be encoded as an integer linear program unless $\NP = \ER$. 
Thus we would expect that such encodings are only possible if some additional assumptions are met.
Vision-stability, as introduced by Hengeveld and Miltzow, could be one such assumption~\cite{HM21}.

\textbf{Simple Polygons.}
Another shortcoming of our result is that it applies only to polygons with holes.
We conjecture that the \nestingbirdbox problem is also \ER-complete for simple polygons.
However, we expect the proof would be significantly more complicated, as this proof uses holes in the polygon as a tool to ensure \boxxes will never be able to see each other unless required for a gadget.

\subsection{Related Work -- Nesting Bird Boxes}
The oldest reference we found that studies the ``visual independence'' traces back to 1970 by Kay and Guay~\cite{kay1970convexity}.
They were interested in sufficient conditions such that a set is the finite union of convex or star-shaped sets. 
Most importantly, if a polygon can be covered with $k$ convex shapes then we cannot place more than $k$ \boxxes in the polygon, as any two \boxxes in the same convex set can see each other.
(This is modulo our definition of visibility.)

In the literature the \nestingbirdbox{} problem is typically referred to as the \textsc{Hidden Set} problem in polygons.
The first algorithmic study was - to the best of our knowledge - in the PhD thesis by Shermer in 1989~\cite{shermer1989visibility}.
Shermer showed that the problem is NP-hard.
Furthermore, they studied various other combinatorial variations.
For example, Shermer showed that there are polygons that do not admit a hidden vertex guard set.
A hidden vertex guard set is a set of vertices that are mutually invisible to one another and see the entire polygon.

Eidenbenz studied the \nestingbirdbox problem in 1999 with ordinary visibility in polygons with and without holes and terrains~\cite{E02, eidenbenz1999many}.
Their motivation was that real estate agents want to sell a plot of land such that no two cabins built on this land can see each other.
Their first result concerns the vertex variant in polygonal domains with $n$ vertices, where all bird boxes have to be placed at vertices.
They showed that it cannot be approximated within a factor better than
$n^{1/6 - \varepsilon}/4$ for all $\varepsilon >0$, unless $\Polytime = \NP$.
The underlying idea is to reduce from the independent set problem in graphs.
Furthermore, they showed that the problem is APX-hard in simple polygons (without holes).

Eidenbenz also studied in 2006 the \textsc{Minimum Hidden Guard Set} problem
on simple polygon with $n$ vertices~\cite{eidenbenz2006finding}.
Here, they allow guards to be placed anywhere in the interior of the polygon.
They showed that this problem cannot be approximated better than $n^{1-\varepsilon}$ for all $\varepsilon > 0 $, unless $\Polytime= \NP$.
They also exhibit an algorithm that has an approximation factor of $n$, which makes their results tight.

Browne et al.~\cite{browne2023constant} gave a constant factor $(1/8)$-approximation algorithm for the \nestingbirdbox problem in 2023.
To do this, they first obtain a $(1/2)$-approximation for the \nestingbirdbox problem in weakly visible polygons and then use a result that any polygon can be decomposed into four types of weakly visible polygons to extend the result to general polygons.

For simple polygons, Alegr\'ia et al.~\cite{alegria20191} showed in 2019 that one can  compute a $(1/4)$-approximation for the \nestingbirdbox problem in $O(n^2)$ time.

For the special case where the polygonal region is restricted to be a thin grid $n$-gon, Bajuelos et al.~\cite{bajuelos2007solving} showed that no such polygon can fit more than $\lceil n/4\rceil$ \boxxes.
Here, a grid $n$-gon is a simple polygon with $n$ vertices with orthogonal, non-collinear edges that can be placed with its vertices on a $(n/2)$ by $(n/2)$ square grid.
A thin grid $n$-gon has the additional constraint that, when extending line segments from its reflex vertices, it cuts the polygon into the least possible amount of separate regions.

Finally, we note a link to twin-width; surprisingly, \nestingbirdbox on simple polygons is FPT when parametrized by twin-width~\cite{bonnet2022twin}.

For a more in-depth survey about the \nestingbirdbox problem, we refer to Browne's 2024 survey~\cite{browne2024overview}.





\subsection{The Existential Theory of the Reals}
The complexity class \ER can be defined in different ways. 
Perhaps the easiest way is as follows: \ER-complete problems are those that are polynomial time equivalent to finding a solution to 
$p(x) = 0$, with $x\in \R^n$ and $p$ a polynomial with integer coefficients.
It is known that \[\NP \subseteq \ER \subseteq \PSPACE.\]
While the first inclusion is trivial (it follows directly from a different definition of \ER) the second one is very intricate and goes back to Canny~\cite{C88, C88b}.

This complexity class emerged from two different traditions.
The first tradition comes from structural complexity theory. 
Researchers tried to formulate models of computation that capture real-valued computations. 
This culminated in the so-called BSS-model of computation~\cite{BCSS98, BS89}.
In a nutshell it generalizes Turing machine to work with real numbers.
However, within this tradition the complexity class \ER played a minor role.
The second tradition followed the results by Mn\"{e}v and Shor 
who originally cared for certain topological phenomena and precision issues
in a geometric context~\cite{S91, M88}.
In 2010, Marcus Schaefer suggested the name \ER~\cite{S10}, as he noticed that it encaptures many fundamental problems in computational geometry.

The main importance of the complexity class \ER 
stems from the fact that a large number of important algorithmic problems from many 
different areas are complete for it.
A compendium from Cardinal, Miltzow and Schaefer~\cite{ERcompendium}
gives an up to date overview.
Miltzow wrote lecture notes to acquaint the reader with \ER~\cite{M26} and Matou\v{s}ek wrote very nice lecture notes that focus on explaining
the \ER-hardness result of stretchability~\cite{M14}.

The main areas for \ER-complete problems are related to
intersection graphs~\cite{KM94, CFMTV18}, visibility~\cite{CH17}, graph drawing~\cite{FKMPTV23, LMM22, AKM23}, game theory~\cite{BM16}, polynomials~\cite{SS17}, polytopes~\cite{RG99, DHM19}, matrix and tensor ranks~\cite{S17},
machine learning, Markov decision processes~\cite{JJK22} and logic~\cite{vdZBL23}.
Problems that raised particular interest is \ER-completeness of the \artgallery problem~\cite{AAM18, S23c}, finding Nash Equilibrium~\cite{SS17},
the 4-dimensional Steinitz-problem~\cite{RG96}, training neural networks~\cite{BEMMSW22, AKM21, Z92}, and tensor rank~\cite{SS18}.
In our context, the \ER-completeness of convex covering is very related~\cite{A22}, due to the relationship between the \boxx number and the cover number.


\section{\ER-membership}
\label{sec:membership}
There are two ways to show \ER-membership. 
We can either describe an ETR-formula that encodes 
the \nestingbirdbox problem or we describe a polynomial time verification algorithm.
For educational purposes we do both.

\textbf{ETR-formula.}
We encode each \boxx using two variables.
Furthermore, we need a predicate $\texttt{Non-Visibility}$ that returns yes,
if and only if two points cannot see each other.
Then the formula has the form:

\[ \exists x_1,y_1,\ldots, x_k,y_k :  \bigwedge_{i,j} \ \texttt{Non-Visibility}(x_i,y_i , x_j,y_j)\]

It remains to explain how we can check for two points $p = (x_i,y_i)$ and $q = (x_j,y_j)$ that they cannot see each other.
To this note that it is sufficient to check if the line segment $pq$ contains a vertex of $P$ or crosses an edge of $P$.
Both of those tests can be formulated with the help of the orientation and collinearity test.
In the collinearity test, we test if three points $a,b,c$ are collinear, by evaluating the determinant
\[\det \begin{pmatrix}
a_x & b_x & c_x \\
a_y & b_y & c_y \\
1 & 1 & 1\end{pmatrix} = 0.\]

The orientation test asks if the three points $(a,b,c)$ in this order are in clockwise or counter clockwise orientation.
In other words, is the point $c$ to the left or to the right of the line $\ell(a,b)$.
This can be tested using the sign of the determinant

\[\det \begin{pmatrix}
a_x & b_x & c_x \\
a_y & b_y & c_y \\
1 & 1 & 1\end{pmatrix}.\]

Point $p$ can see point $q$ if and only if no vertex of the polygon is on the line segment $pq$ between $p$ and $q$ and no edge of the polygon intersects $pq$.
We use the collinearity test to check whether a vertex of the polygon lies on $pq$.
Likewise, we use the orientation test to check whether an edge $uv$ with endpoints $u$ and $v$ of the polygon intersects $pq$.
We see that $pq$ intersects $uv$ if and only if $u$ and $v$ lie on different sides of $pq$ and $p$ and $q$ lie on different sides of $uv$.
As such, we can use $\texttt{and}$ and $\texttt{or}$ to use the two smaller primitives in order to describe the $\texttt{Non-Visibility}$ predicate.

In general, the membership proof using ETR-formulas is either fairly tedious, and/or it relies on the familiarity of the reader with ETR-formulas. 
Frankly speaking, this experience can be very mixed. 
Some readers might have not seen ETR-formulas before reading this paper and others might find this description trivial as they have done many similar arguments uncountable many times.

We give therefore another \ER-membership proof that replaces experience with ETR-formulas with experience with real RAM algorithms. 

\textbf{Polynomial Verification.}
Another way to prove \ER-membership is to use a machine model of \ER. 
We use the model by Erickson, Hoog and Miltzow~\cite{EvdHM20}.
They show that a problem is in \ER, if and only if there exists a polynomial time verification algorithm,
which is allowed to run on the realRAM and accepts as a real-valued witness.
In our case the placement of all the boxes is the witness and checking that no two \boxxes see each other is the verification algorithm.
We can check whether two \boxxes see each other by computing whether the line segment between the \boxxes intersects any edge of the polygon, akin to the Visibility predicate from Efrat and Har-Peled~\cite{EFRAT2006238}.
%

\section{Overview of the Construction}
\label{sec:overview}
This section is dedicated to the proof of the \ER-hardness part of Theorem~\ref{thm:birds}.

\subsection{Reduction from Continuous Constraint Satisfaction Problems.}
\label{subsec:ccsp}
To show \ER-hardness, we reduce from a specific version of a continuous constraint satisfaction problem.
To avoid overly heavy formalism about constraint satisfaction problems, we merely define the problem at hand that we need
and refer to the paper of Miltzow and Schmiermann for the formal details~\cite{MS24}.
We define ETR($h$) as the following problem.
We are given variables 
$x_1,\ldots,x_n$ and constraints $C_1,\ldots,C_m$ of the form 
\[x+y = z, \, h(x) = y, \, x \geq 0, \, x =1.\]
We need to decide if there are variable values from the interval $[-1,1]$ such that all constraints are satisfied.

Miltzow and Schmiermann showed that ETR($h$) is \ER-hard when $h$ is well-behaved and curved~\cite{MS24}.
(See Theorem 1.11, with $\delta = 1$ and $f = h(x) - y$, then see Lemma 1.16; note that being triple algebraic is only needed for \ER-membership.)
By Miltzow and Schmiermann, we say that $h: I \rightarrow \R$ is \textit{well-behaved} if it satisfies the
following conditions. Firstly,
$h$ is a three-times continuously differentiable function, where $I\subset \R$ being an interval with $0$ in its interior.
Secondly, $h$ is a function of the form $h(x) = \frac{ax+b}{cx+d}$ for rational $a, b, c, d$.
Finally, we say $h$ is curved if it is not affine.

\subsection{Reduction Overview}
\label{sub:overview}
In the reduction from ETR($h$), we construct a polygonal domain, as displayed in \Cref{fig:overview}.
The polygonal domain consists of a big central chamber padded with various gadgets.

\begin{figure}
    \centering
    \includegraphics[page=12]{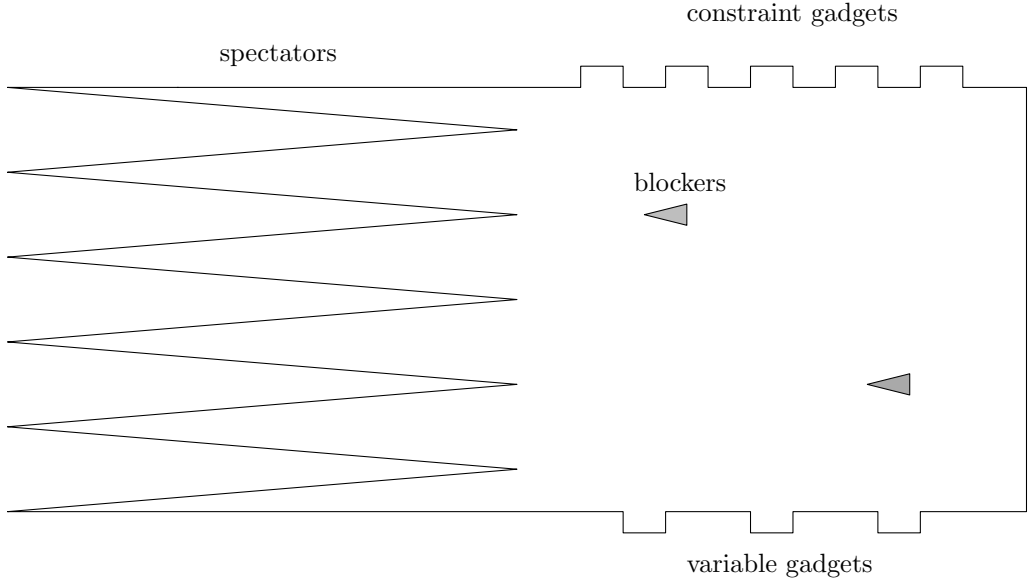}
    \caption{An overview of our construction. The variable gadgets at the bottom encode variables. The constraint gadgets at the top encode addition, scaling, and curvature. The spectators ensure no additional \boxxes can be added and the blockers ensure that constraint gadgets only interfere with its corresponding variables. }
    \label{fig:overview}
\end{figure}

\textbf{Variable Encoding.}
The first gadget ensures that there is a single \boxx{} at some undetermined position on a specific line segment (called \textit{\boxx segment}) in the interior of the polygonal domain. 
This effectively encodes a variable and the position of the \boxx on the \boxx segment corresponds to the value of the variable.
We call the \boxx on the \boxx segment \textit{variable \boxx}.
In our construction, the gadgets encoding variables will appear on the bottom of the central chamber.

\textbf{Constraint Encoding.}
The second idea is to have constraint gadgets that interact with the variable gadgets.
Two or three variable \boxxes can see into the gadget and if their variable encoding satisfies the intended constraint then
one extra \boxx can be placed and otherwise not.
The constraints that we will describe are
copy and scaling gadgets and
two addition gadgets ($x+y\leq z$ and $x+y \geq z$), 
two curved gadgets ($h(x) \leq y$ and $h(x) \geq y$).
A big part of the proof is to describe those gadgets and to show that they work correctly.
In the final construction, these gadgets will appear to the top of the central chamber.

\textbf{Blockers.}
The base of our construction is a big rectangle, see Figure~\ref{fig:overview}.
We attach all the variable gadgets to the bottom and all the constraint gadgets to the top.
Furthermore, we add small holes in a way that every variable \boxx can see exactly those constraint gadgets that it is supposed to interact with.
Those blockers are of triangular shape.
We describe in \Cref{sub:synthesis} the precise placement of the blockers.

\textbf{Spectators.}
We need to make sure that we only place \boxxes at the intended positions. 
For this purpose, we create little nooks where we know that a \boxx will be placed in some optimal solution.
Those \boxxes are called spectators and their sole purpose is to prevent that any \boxx is placed in their visibility region.
We use spectators both inside the gadgets as well as in the big rectangle.
Section~\ref{sub:spectators} formulates some general lemmas that will be helpful to use spectators.

\textbf{Threshold.}
At last, we will have to determine a threshold for the \nestingbirdbox problem.
We will do this by determining a threshold $t(G)$ for each gadget $G$ and the middle part individually and then take the sum of all of those.
At the same time, we will give a covering with $t(G)$ convex regions for each gadget $G$. 
As each convex region contains at most one \boxx{}, we also know an upper bound on the number of \boxxes per gadget. 
And thus we know that if there is any valid solution with the number of \boxxes equal to the sum of those thresholds, then each gadget $G$ must contain exactly $t(G)$ \boxxes.

\subsection{Spectators}
\label{sub:spectators}

\begin{figure}[t]
    \centering
    \includegraphics{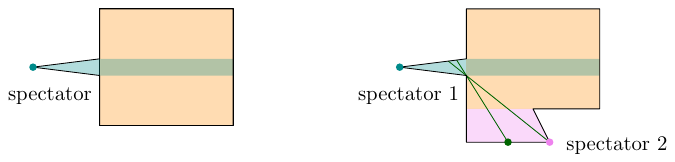}
    \caption{Left: The polygon is covered with two fully visible regions. 
    Each region can contain at most one \boxx. 
    Furthermore, the turquoise region has a vision minimal position in the small nook.
    We will argue that this will be the position of the spectator.
    Right: The second spectator is also vision minimal once we remove the visibility region of spectator~$1$.}
    \label{fig:spectator-idea}
\end{figure}

We call a \boxx a spectator if we can argue that there must be a fixed location for this \boxx in any solution.
In particular, any \boxx on a \boxx segment is not a spectator, such as the \boxxes that encode our variables.
We aim for the visibility polygons of the spectators to cover most of the polygon, all but the \boxx segments and specific regions of the constraint gadgets. 

We illustrate the general idea of spectators using Figure~\ref{fig:spectator-idea}.
On the left, there is a small nook of the polygon from where we can see very little.
Intuitively, it seems a good idea to place a \boxx there.
To make this mathematically more rigorous, we see that the polygon has a cover with two ``fully visible'' regions. 
Each region can contain at most one \boxx. 
Furthermore, the turquoise region has a vision minimal position in the small nook.
So, if there is any optimal solution to the \nestingbirdbox problem with two \boxxes within the polygon, we can find another one where we move the \boxx in the turquoise region to the spectator position.
Unfortunately, this idea is not enough, as can be seen on the right of Figure~\ref{fig:spectator-idea}.
As we can see, spectator~$2$ is not at a vision minimal position.
However, we already know the position of spectator~$1$ and once we ignore the area seen by spectator~$1$, the position of spectator~$2$ is vision minimal with respect to the rose region.
In the following, we will make those ideas more precise.

We assume that we are given a closed polygonal domain $P$.
We say that a region $C$ of the polygon $P$ is \textit{fully visible} if any two \boxxes placed in~$C$ can see each other by open-visibility. 
Note that not all convex regions have this property because visibility regions may be open. 
Fully visible regions are useful to study because of the property stated in the following lemma.

\begin{lemma}
\label{lem:UpperBound}
If it is possible to cover a polygon $P$ with $k$ fully visible regions, then $k$ is an upper bound on the number of \boxxes that can be placed in $P$.
\end{lemma}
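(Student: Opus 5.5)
The plan is a direct pigeonhole argument. Let $C_1,\ldots,C_k$ be fully visible regions whose union is $P$, and let $B$ be any valid placement of \boxxes in $P$, so that no two points of $B$ see each other under open-visibility. We will show that $|B|\le k$ by assigning each \boxx to a region that contains it and arguing that this assignment is injective.

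The first step is to define the assignment. Since the regions cover $P$, every $b\in B$ lies in at least one $C_i$. If $b$ lies in several regions (for instance on a shared boundary), we fix one of them by an arbitrary rule, such as the smallest index $i$ with $b\in C_i$. This gives a map $\phi\colon B\to\{1,\ldots,k\}$.

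The second step is to show that $\phi$ is injective. Suppose two distinct \boxxes $a,b\in B$ satisfy $\phi(a)=\phi(b)=i$. Then both lie in $C_i$. By the definition of a fully visible region, any two \boxxes placed in $C_i$ see each other, so $a$ and $b$ see each other. This contradicts the validity of $B$. Hence $\phi$ is injective and $|B|\le k$.

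The only point needing care is that the definition of fully visible is applied exactly as stated: it is a property of pairs of points of $C_i$ with respect to open-visibility. It is not merely convexity, which by the remark before the lemma would not suffice. We also treat regions as closed subsets, so a \boxx on a common boundary is still contained in some region of the cover. With these conventions no real obstacle remains, and the argument is a one-line pigeonhole once the assignment is fixed.
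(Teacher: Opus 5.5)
Your proof is correct and is the paper's argument: each fully visible region holds at most one birdhouse, and your injective assignment $\phi$ just spells out that pigeonhole step. The closedness convention is unnecessary, since a cover already places every point of $P$ in some $C_i$.
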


\begin{proof}
    This follows directly from the definition of fully visible regions, which implies that at most one \boxx can be placed in each fully visible region.
\end{proof}

In the next lemma, we give a simple criterion to determine when a convex polygon is fully visible.

\begin{lemma}
\label{lem:FullyVisible}
    Let $C\subseteq P$ be a closed convex polygon.
    The set $C$ is fully visible if and only if no edge of $C$ contains any polygon vertex of $P$ in its interior.
\end{lemma}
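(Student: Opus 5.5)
We prove the two directions separately, using only the definition of open-visibility: $a$ sees $b$ if and only if the open segment $ab$ lies in $P$ and contains no vertex of $P$. Throughout, take $a \neq b$; the statement is about pairs of distinct \boxxes.

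\emph{Sufficiency.} Suppose no edge of $C$ contains a vertex of $P$ in its relative interior, and let $a,b\in C$ be distinct. By convexity $ab\subseteq C\subseteq P$, so the open segment lies in $P$. It remains to rule out a vertex $v$ of $P$ lying on the open segment $ab$.

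Assume such a $v$ exists. Then $v\in C$, and $v$ lies in either the interior of $C$ or on $\partial C$.
\begin{itemize}
\item If $v$ is in the interior of $C$, then a small disk around $v$ lies in $C\subseteq P$. A vertex of a polygonal domain is a boundary point of $P$, so this is impossible.
\item If $v\in\partial C$, then $v$ lies on some edge $e$ of $C$. Since $v$ is in the relative interior of $ab$ and $v$ is a boundary point of the convex set $C$, the supporting line of $C$ at $v$ must contain the whole segment $ab$. Hence $ab\subseteq e$ (the containing face is an edge, since $a\neq b$). So $v$ lies on $e$ strictly between $a$ and $b$, and therefore in the relative interior of $e$, contradicting the hypothesis.
\end{itemize}
One degenerate point needs care: $v$ might be a corner of $C$. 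This cannot happen, because a point strictly inside a segment contained in $C$ is never an extreme point of $C$.

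\emph{Necessity.} Suppose some edge $e$ of $C$ contains a vertex $v$ of $P$ in its relative interior. Choose $a,b\in e$ on opposite sides of $v$, for example the two endpoints of $e$. Then $v$ lies on the open segment $ab$, so $a$ and $b$ do not see each other, and $C$ is not fully visible.

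The only step needing real care is the boundary case of sufficiency. The key fact is that a segment through a boundary point $v$ of a convex set, with $v$ in the relative interior of the segment, lies entirely in a supporting line at $v$, and hence in a single edge of $C$. I would state this fact explicitly, and also note that vertices of $P$ lie on $\partial P$, which is what excludes the interior case.
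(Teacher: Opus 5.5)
Your proof is correct and takes the same approach as the paper. Necessity uses two points on the offending edge on either side of the vertex. Sufficiency uses convexity plus the edge condition, and there your interior/boundary case split with the supporting-line argument fills in a step the paper only asserts: that a vertex of $P$ on the open segment $ab$ must lie in the relative interior of an edge of $C$, with edges of $C$ read as maximal segments of $\partial C$.
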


\begin{proof}
    We show the first direction by contraposition.
    If there is a vertex $v$ of $P$ on an edge of $C$ then we can take two points $p,q$ on the same edge left and right of $v$. 
    The two points cannot see each other and thus $C$ is not fully visible.
    
    For the reverse implication, let $p,q\in C$ be 
    two \boxxes. 
    By convexity, the line segment $p,q$ is completely contained in $C$ as well as $P$. 
    Additionally, by our restriction of the allowed edges, the open segment $p,q$ cannot contain a vertex of $P$. 
    Thus, by the definition of open-visibility, $p$ and $q$ are visible to each other.

    Note that a vertex can be between two adjacent collinear edges if one edge is open and one edge is closed.
\end{proof}

Next, we investigate under which conditions a \boxx{} in a fully visible region $C$ is a spectator.
For this, we need to define the notion of minimal visibility.
We say that a point $p\in C\subseteq P$ is  \textit{vision minimal}, if for every other point $q \in C$, it holds that the visibility region of $p$ is contained in the visibility region of $q$,
i.e., $\forall q \in C : \Vis(q) \subseteq \Vis(p)$. 
Such a vision minimal point is also called a \textit{spectator}.
Note that a vision minimal point is not always unique, for example, if the polygon $P$ is convex.

The following observation tells us that we can always place a \boxx with a spectator. 
We say that a point $p\in P$ is \textit{essentially fixed}, if for any solution $B$, we can remove one of the \boxxes $q$ and replace it by the point $p$.
The next observation follows from the definition.

\begin{observation}
    \label{obs:vision-minimal-replacement}
    Let $p$ be a spectator with respect to the fully visible region $C = C_1$.
    Furthermore assume that $C_1,\ldots,C_k$ are fully visible regions that cover $P$ and $k$ is the threshold for the \nestingbirdbox problem.
    Then $p$ is essentially fixed.
\end{observation}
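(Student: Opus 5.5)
The plan is a pigeonhole argument followed by a replacement argument. Throughout, a ``solution'' means a set $B$ of $k$ pairwise invisible \boxxes; this is the only meaningful reading, since $k$ is the threshold. I also read vision minimality in its intended direction: $\Vis(p)\subseteq \Vis(q)$ for every $q\in C$. The displayed quantifier in the definition appears to have the inclusion reversed, and the intended meaning is that $p$ sees as little as possible.

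\textbf{Step 1: every $C_i$ contains exactly one \boxx of $B$.} Let $B$ be a solution with $|B|=k$. Since $C_1,\ldots,C_k$ cover $P$, assign to each $b\in B$ some index $i(b)$ with $b\in C_{i(b)}$. By the argument of Lemma~\ref{lem:UpperBound}, a fully visible region contains at most one \boxx of $B$, so the map $b\mapsto i(b)$ is injective. It goes from a $k$-element set to $\{1,\ldots,k\}$, so it is a bijection. Hence every $C_i$ contains at least one \boxx, and therefore exactly one. In particular there is a unique $q\in B\cap C_1$, and no other element of $B$ lies in $C_1$. The regions may overlap, but this does not hurt the argument, because we only use ``at most one per region'' together with surjectivity of the assignment.

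\textbf{Step 2: replace $q$ by $p$.} Set $B'=(B\setminus\{q\})\cup\{p\}$. If $p=q$ there is nothing to show. Otherwise $p\notin B\setminus\{q\}$: any element of $B$ equal to $p$ lies in $C_1$, so it would have to be $q$. Hence $|B'|=k$.

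It remains to check that $p$ sees no $b\in B\setminus\{q\}$. Since $B$ is a solution, $b\notin\Vis(q)$. Vision minimality of $p$ in $C_1$, applied to $q\in C_1$, gives $\Vis(p)\subseteq\Vis(q)$, so $b\notin\Vis(p)$. Open-visibility is symmetric, because the open segment $pb$ is the same set as $bp$. Therefore $p$ and $b$ do not see each other. All other pairs in $B'$ were already pairs in $B$, so $B'$ is a solution. This shows that $p$ is essentially fixed.

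\textbf{Main obstacle.} The argument is routine. The step needing care is Step 1, which shows that the \boxx removed is one lying in $C_1$. A naive argument would try to remove an arbitrary \boxx, and then vision minimality could not be applied. The other point requiring care is the direction of the inclusion in the definition of vision minimal, as noted above.
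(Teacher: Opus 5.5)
Your proof is correct and follows the same route as the paper: a pigeonhole argument over the cover produces a \boxx{} $q\in C_1$, which is then swapped for $p$ using vision minimality. Your reading of the inclusion as $\Vis(p)\subseteq\Vis(q)$ is the right one. The paper's proof, like the displayed formula in its definition, writes $\Vis(q)\subseteq\Vis(p)$, which would not justify the swap, so your version, including the injectivity argument for exactly one \boxx{} per region, is the more careful rendering of the same idea.
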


\begin{proof}
    Let $B$ be any solution. As the threshold is $k$ we need at least $k$ \boxxes. Furthermore, due to the convex cover we must have at least one \boxx in each region $C_i$.
    Thus there must be also a \boxx  $q$ in $C_1$.
    By definition, the visibility region of $q$ is contained in the visibility region of $p$, i.e., $\Vis(q)\subseteq \Vis(p)$. 
    Thus $B$ with $q$ replaced by $p$ will also be a valid solution.
    And this finishes the proof.
\end{proof}

We may have two regions that are not vision minimal, but we still want to enforce that \boxxes are placed in a specific position.
For example, in \Cref{fig:spectator-idea}, we see that the visible region from the two nooks can be covered by two fully visible regions.
As such, we know there may be at most two guards in the shared visible region of these two points.
Furthermore, this also implies that whenever there are two guards in this region, they can always be moved towards the corners in the nooks.

Let $\Vis(P)$ for a set of points $P$ be the union of the visibility regions of all points in $P$.
To formalize this, we say that a pair $P = \{ p, p' \}$ of points is vision co-minimal when
\[
    \forall q \in \Vis(p, p'), \forall q' \in \Vis(p, p') : q \text{ does not see } q' \implies \Vis(p, p') \subseteq \Vis(q, q').
\]
\begin{lemma}
    A pair of mutually non-visible points $p, p'$ is vision co-minimal if two fully visible regions cover $\Vis(p, p')$.
\end{lemma}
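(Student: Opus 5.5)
The plan is to show directly that $\Vis(p,p') \subseteq C_1 \cup C_2 \subseteq \Vis(q,q')$ for every admissible pair $q,q'$. Here $C_1, C_2$ are the two fully visible regions with $\Vis(p,p') \subseteq C_1 \cup C_2$. Mutual non-visibility of $p$ and $p'$ will not actually be needed for this inclusion; it only ensures that $\{p,p'\}$ is a legitimate pair of \boxxes. The whole argument is a pigeonhole step followed by unwinding the definition of fully visible regions.

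\textbf{Step 1 (separating $q$ and $q'$).} Fix $q,q' \in \Vis(p,p')$ with $q$ not seeing $q'$. Since $\Vis(p,p') \subseteq C_1 \cup C_2$, each of $q,q'$ lies in $C_1$ or in $C_2$. They cannot both lie in the same $C_i$, since any two points of a fully visible region see each other. In particular $q \neq q'$. So, after relabelling, $q \in C_1$ and $q' \in C_2$. This also handles points in $C_1 \cap C_2$: if $q$ lay in both regions, then $q'$ would share a region with $q$, which we just ruled out.

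\textbf{Step 2 (covering).} Because $C_1$ is fully visible and $q \in C_1$, the point $q$ sees every point of $C_1 \setminus \{q\}$, so $C_1 \setminus\{q\} \subseteq \Vis(q)$. Similarly $C_2 \setminus\{q'\} \subseteq \Vis(q')$. Hence
\[ \Vis(p,p') \subseteq C_1 \cup C_2 \subseteq \Vis(q) \cup \Vis(q') \cup \{q,q'\}. \]

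\textbf{Step 3 (the points $q,q'$ themselves).} This is the only delicate point, and it is a matter of convention rather than geometry. For $q$ itself, the open segment $qq$ is empty, so under open-visibility it trivially lies in $P$ and contains no vertex of $P$. Thus $q \in \Vis(q)$, and likewise $q' \in \Vis(q')$. I will state this convention explicitly. If one prefers to exclude a point from its own visibility region, the statement should instead be read modulo the two points $q,q'$ themselves, which is all that the later replacement arguments need. With this convention, $\Vis(p,p') \subseteq \Vis(q,q')$, which is exactly the definition of vision co-minimality.
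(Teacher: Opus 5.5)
Your proof is correct and follows the same route as the paper. Both arguments place the non-seeing pair $q,q'$ in different fully visible regions $C_1, C_2$, then use $C_1 \subseteq \Vis(q)$ and $C_2 \subseteq \Vis(q')$ to conclude $\Vis(p,p') \subseteq C_1 \cup C_2 \subseteq \Vis(q,q')$. You are in fact more careful than the paper: it simply assumes $q \in C_1$ and $q' \in C_2$, silently uses $q \in \Vis(q)$, and writes $\Vis(p,p') = C_1 \cup C_2$ where only $\subseteq$ holds, whereas you justify the split and state the self-visibility convention explicitly.
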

\begin{proof}
    Note that $p \not\in \Vis(p')$, so $\Vis(p, p')$ can never be covered by a single fully visible region.
    Let $C_1$ and $C_2$ be those two fully visible regions.
    We know by \Cref{lem:UpperBound} that at most two \boxxes can be placed in $\Vis(p, p')$, one in $C_1$ and one in $C_2$.
    Suppose a \boxx $q$ is placed in $C_1$ and a \boxx $q'$ in $C_2$.
    Then, $C_1 \subseteq \Vis(q)$ and $C_2 \subseteq \Vis(q')$ by the definition of fully visible regions.
    As such, we get that $\Vis(p, p') = C_1 \cup C_2 \subseteq \Vis(q, q')$.
    This shows that $p, p'$ are vision co-minimal if $\Vis(p, p')$ can be covered by two fully visible regions.
\end{proof}

As we can see on the right of Figure~\ref{fig:spectator-idea}, we might have a region that is not vision \mbox{(co-)minimal}, but we still want to enforce a \boxx to be placed there.
To this end, we define \textit{relative vision minimality}.
Intuitively, relative vision minimality ensures that a point is vision minimal when you remove the visibility regions from points already known to be vision minimal.
Formally, we define the notion inductively.
Let $P_1,\ldots, P_k$, where $P_i = \{ p_i \}$ or $P_i = \{ p_i, p_i'\}$ and $C_1, \ldots, C_k$ be their visibility polygons.
For $i=1$, relative vision minimality is the same as normal vision minimality.
For $i > 1$, we say that $P_i = \{ p_i \}$ is \textit{relative vision minimal} if the following condition is met.

\[\forall q\in C_i\setminus \Vis(p_1,\ldots,p_{i-1}): \Vis(p_i)\setminus \Vis(p_1,\ldots,p_{i-1}) \subseteq  \Vis(q) \setminus \Vis(p_1,\ldots,p_{i-1}).\]
Likewise, if $P_i = \{p_i, p'_i\}$, it is \textit{relative vision minimal} if

\begin{flalign*}
&\forall q\in C_i\setminus \Vis(p_1,\ldots,p_{i-1}), \forall q' \in C_i'\setminus \Vis(p_1,\ldots,p_{i-1}) : \\
&\qquad q \text{ does not see } q' \implies \Vis(p_i, p'_i)\setminus \Vis(p_1,\ldots,p_{i-1}) \subseteq  \Vis(q, q') \setminus \Vis(p_1,\ldots,p_{i-1}).&
\end{flalign*}

In other words, the visibility region of $p_i$ is minimal, once we ignore the visibility regions of the previous \boxxes.





\begin{lemma}
    \label{lem:SpectatorReplacement}
    Given sets of vision co-minimal spectators $P_1, P_2, \ldots,P_k$ and their respective visibility polygons $C_1,\ldots,C_k$ in the region $Q$, the positions of $P_1,\ldots,P_k$ are essentially fixed.
\end{lemma}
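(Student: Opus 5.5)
We argue by induction on $i$, replacing the \boxxes of an arbitrary solution $B$ one group at a time. We interpret the statement as follows: the $P_i$ are relative vision (co-)minimal in the sense defined above, and the regions $C_i$ are fully visible regions belonging to a cover of $Q$ whose size equals the threshold. The induction hypothesis after step $i-1$ is this: there is a solution $B_{i-1}$ with $|B_{i-1}|=|B|$ that contains all points of $P_1\cup\cdots\cup P_{i-1}$.

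For $i=1$ we use Observation~\ref{obs:vision-minimal-replacement} when $P_1=\{p_1\}$. When $P_1=\{p_1,p_1'\}$ we use the co-minimal lemma: the threshold forces two \boxxes $q,q'$ of $B$ in $\Vis(p_1,p_1')$, one in each of its two fully visible regions. Since $q$ and $q'$ are mutually non-visible, we get $\Vis(p_1,p_1')\subseteq\Vis(q,q')$. We swap $q,q'$ for $p_1,p_1'$. Any \boxx $b$ of $B$ seeing $p_1$ would lie in $\Vis(p_1,p_1')\subseteq \Vis(q,q')$, contradicting validity of $B$. Also $p_1$ and $p_1'$ do not see each other by assumption.

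For the inductive step, the counting argument again gives one \boxx $q\in B_{i-1}$ in $C_i$ (or two, $q\in C_i$ and $q'\in C_i'$). The key point is that $q\notin\Vis(p_1,\ldots,p_{i-1})$, since $B_{i-1}$ already contains $p_1,\ldots,p_{i-1}$ and is valid. Hence $q$ lies in the domain of the quantifier in the definition of relative vision minimality, and we obtain
\[\Vis(p_i)\setminus\Vis(p_1,\ldots,p_{i-1})\subseteq\Vis(q)\setminus\Vis(p_1,\ldots,p_{i-1}).\]
We then form $B_i$ by replacing $q$ (respectively $q,q'$) with $p_i$ (respectively $p_i,p_i'$).

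To check that $B_i$ is valid, take any other \boxx $b\in B_i$ and split into two cases. If $b$ is one of the earlier spectators $p_j$ with $j<i$, we need that $p_i$ does not see $p_j$; this should follow from the fixed positions of the construction being pairwise mutually invisible. If $b$ is not a spectator, then $b\notin\Vis(p_1,\ldots,p_{i-1})$ by validity of $B_{i-1}$. So if $b\in\Vis(p_i)$, the displayed inclusion gives $b\in\Vis(q)$, contradicting validity of $B_{i-1}$. The pair case is identical, using the implication in the co-minimal definition together with the fact that $q$ and $q'$ do not see each other. After $k$ steps we obtain $B_k\supseteq P_1\cup\cdots\cup P_k$ of the same size, which is exactly essential fixedness.

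The main obstacle is the hypothesis of the first case, mutual invisibility of the spectators themselves and the precise meaning of the $C_i$. The definitions only constrain $\Vis(p_i)$ outside the earlier visibility regions, so I would first try to derive $p_i\notin\Vis(p_j)$ from the relative minimality condition. If that fails, I would add it as an explicit hypothesis, noting that it holds in all gadgets where the lemma is applied. A second subtlety is ensuring that the counting argument places a \boxx of $B_{i-1}$ in $C_i\setminus\Vis(p_1,\ldots,p_{i-1})$. This follows because the threshold forces exactly one \boxx per region of the cover, and the region owned by the $i$-th spectator is disjoint from earlier spectators' views.
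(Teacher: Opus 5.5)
Your proof follows the same idea as the paper's: swap the \boxxes of $B$ lying in each $C_i$ for the spectators, using containment of visibility regions. The paper does this in one step, taking $Q_i = B\cap C_i$ and simply asserting $\Vis(P_1,\ldots,P_k)\subseteq\Vis(Q_1,\ldots,Q_k)$. Your one-group-at-a-time induction is what actually justifies that assertion. Relative vision minimality only constrains points $q\notin\Vis(p_1,\ldots,p_{i-1})$, and you only know the relevant \boxx is such a point once the earlier spectators are in place.

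Your fallback on mutual invisibility of the spectators is the right call. It cannot be derived from relative minimality: an earlier spectator $p_j$ lies in $\Vis(p_1,\ldots,p_{i-1})$, which is exactly the set the condition discards, so nothing rules out $p_j\in\Vis(p_i)$. The paper's proof needs the same fact for $B'$ to be a solution and uses it without comment, so it belongs among the hypotheses.

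One step needs a different justification. You claim the $i$-th region is disjoint from the earlier spectators' views. That is false in general: in the central chamber, consecutive pocket spectators and the blocker spectators see overlapping regions. Indeed, the relative notion exists precisely because these regions overlap.

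What you actually need is weaker: the \boxx $q\in B_{i-1}\cap C_i$ is not one of $p_1,\ldots,p_{i-1}$. Validity of $B_{i-1}$ then gives $q\notin\Vis(p_1,\ldots,p_{i-1})$. This follows from the exact count. Suppose $m$ fully visible regions cover the domain and a solution has $m$ \boxxes. Then $\sum_R |B\cap R|\ge m$, while each of the $m$ summands is at most $1$. Hence every \boxx lies in exactly one region. Since $p_j\in B_{i-1}$ lies in its own region, it does not lie in $C_i$.

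With these two adjustments, your induction is a complete and more careful version of the paper's argument.
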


\begin{proof}
    Let $B$ be any solution. 
    There are exactly $|P_i|$ \boxxes of $B$ inside $C_i$, which we denote as $Q_i$.
    Note that 
    \[\Vis(P_1,\ldots,P_k) \subseteq \Vis(Q_1,\ldots,Q_k).\]
    Thus 
    \[B' = B \setminus \{Q_1,\ldots,Q_k\} \cup \{P_1,\ldots,P_k\}\]
    is also a solution.
    This shows the claim.
\end{proof}

For the remainder of the proof, we first prove that we are able to place a blocker in the central chamber in \Cref{sub:blockers}, such that there is a \boxx within the blocker covering a region to the right of the blocker until the right edge of the central chamber.
Then, we prove the central chamber can be fully covered by spectators in \Cref{sub:specs}, before finally describing the gadgets for which we placed blockers in the first place, in \Cref{sub:gadgets}.
We chose this order as the ability to cover the central chamber with spectators relies on the construction of the blockers.
Likewise, the ability to have spectators guard the auxiliary regions of the gadgets depends on whether the central chamber is covered by spectators.
In each step, we pay attention to where we may assume spectators are located and which regions then must be covered by the spectators.

\subsection{Blockers}
\label{sub:blockers}

\begin{figure}
    \centering
    \includegraphics[page=1, width=0.5\linewidth]{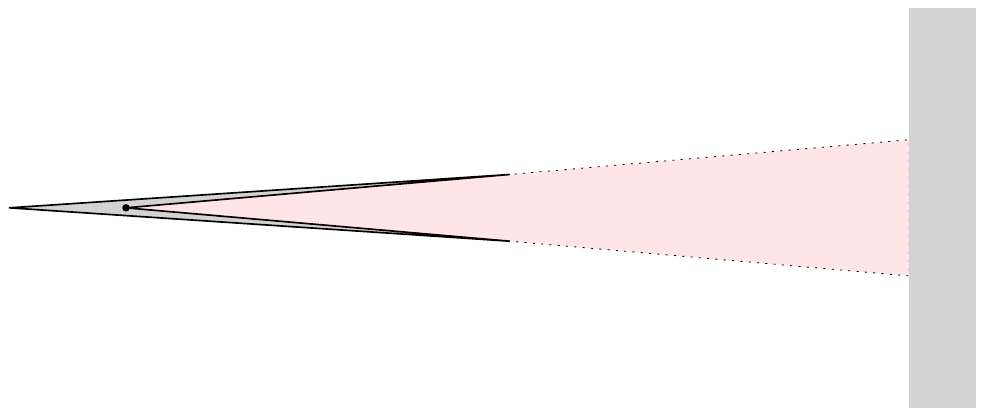}
    \caption{A blocker. Note that it can be made arbitrarily thin.}
    \label{fig:blocker}
\end{figure}

In this section, we present a gadget called a blocker.
A blocker is an arbitrarily thin hole in the polygon with a nook on its right-hand side.
All blockers will appear in the central chamber of the polygon.
An example of a blocker is depicted in \Cref{fig:blocker}.

First, we define what we mean by the central chamber and a blocker formally.

\begin{definition}[Central Chamber]
    The central chamber is a rectangular axis-aligned region with the following edges:
    \begin{itemize}
        \itemsep0em 
        \item \textbf{$e_\texttt{top}$}: an edge segmented by constraint gadgets.
        \item \textbf{$e_\texttt{left}$}: an edge segmented by spectator gadgets.
        \item \textbf{$e_\texttt{bottom}$}: an edge segmented by variable gadgets.
        \item \textbf{$e_\texttt{right}$}: a continuous edge without any gadgets.
    \end{itemize}
\end{definition}

\begin{definition}[Blocker]
    A blocker $B$ is a hole inside the central chamber consisting of three reflex vertices and a convex vertex $b$.
    The visibility region $\Vis(b)$ does not contain any polygon edges outside of its own edges and the edge $e_\texttt{right}$ of the central chamber.
    The visibility region $\Vis(b)$ must include the rightward axis-aligned ray originating from $b$ until the point where it intersects $e_\texttt{right}$.
    We ensure that the y-coordinate of $b$ is the same as the y-coordinate of the left reflex vertex and call this the y-coordinate of $B$, $y(B)=y(b)$.
\end{definition}

We point out that any blocker $B$ can be set to make $\Vis(b)$ arbitrarily narrow.
We claim that there must always be a spectator within the nook of a blocker.
\begin{lemma}
    Let $B_1, B_2, \dots, B_k$ be all $k$ blockers in the polygon.
    Let $b_i$ for $1 \leq i \leq k$ be the point at the convex vertex of $B_i$ and let $C_i = \Vis(p_i)$.
    Then there are essentially-fixed spectators at $b_1, b_2, \dots, b_k$.
\end{lemma}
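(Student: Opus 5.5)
The plan is to show that each $b_i$ is a spectator with respect to the region $C_i=\Vis(b_i)$, and that $C_i$ is (or can be shrunk to) a fully visible region that belongs to the global cover used to define the threshold. Once both facts hold, Observation~\ref{obs:vision-minimal-replacement}, or more generally Lemma~\ref{lem:SpectatorReplacement}, shows that every $b_i$ is essentially fixed. I would process the blockers one at a time, so that any solution can be modified to place a \boxx at $b_1$, then at $b_2$, and so on.

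\textbf{Step 1 (shape of $\Vis(b_i)$).} By the definition of a blocker, $b_i$ is a convex vertex of the hole $B_i$ inside a nook bounded by the blocker's own edges. Its visibility region contains no polygon edges other than the blocker's own edges and $e_\texttt{right}$, and it contains the horizontal ray from $b_i$ to $e_\texttt{right}$. Because the nook can be made arbitrarily thin, $\Vis(b_i)$ is a thin convex wedge with apex $b_i$, bounded by the two nook edges, extended to $e_\texttt{right}$ and cut off there.

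I would argue that the closure of this wedge is a closed convex polygon whose edges contain no vertex of $P$ in their interior. The wedge's sides run along the blocker edges up to the reflex vertices, and the far side lies on $e_\texttt{right}$, which has no gadgets. Any other blocker whose vertex might lie on a side of the wedge is excluded by choosing the blocker positions generically, with distinct slopes and $y$-coordinates. Lemma~\ref{lem:FullyVisible} then makes $C_i$ fully visible.

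\textbf{Step 2 (vision minimality).} For any $q\in C_i$ I need $\Vis(b_i)\subseteq \Vis(q)$. Since $C_i$ is fully visible, $q$ sees all of $C_i=\Vis(b_i)$, so this inclusion is immediate. Each $\{b_i\}$ is therefore vision minimal in its own region, and no relative argument is needed.

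\textbf{Step 3 (cover and replacement).} I would include each $C_i$ as one of the fully visible regions in the cover that defines the threshold. Any solution meeting the threshold then has exactly one \boxx $q_i\in C_i$. Replacing $q_i$ by $b_i$ does not increase visibility, because $\Vis(b_i)\subseteq\Vis(q_i)$, so the modified set is still a solution.

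The wedges $C_i$ can overlap near $e_\texttt{right}$, so a single \boxx might lie in two of them. In that case I would count it toward only one region; the threshold argument with the cover still forces one \boxx per region.

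\textbf{Main obstacle.} The hard part is Step 1, guaranteeing that no polygon vertex lies on the interior of an edge of $C_i$. This is a genericity and placement condition on how blockers are positioned relative to each other and to the gadget openings. I would handle it by requiring that the nooks are thin enough and the blockers are in general position, deferring the precise conditions to the placement in \Cref{sub:synthesis}.
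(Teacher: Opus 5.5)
Your plan is the paper's own: $\Vis(b_i)$ is a thin triangle with apex $b_i$ ending on $e_{\texttt{right}}$, so it is fully visible, so $b_i$ is vision minimal, and Lemma~\ref{lem:SpectatorReplacement} (or Observation~\ref{obs:vision-minimal-replacement}) finishes the proof. The flaw is in how Step~1 certifies full visibility.

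The two long sides of the wedge are the rays from $b_i$ along the two nook edges. Each ray passes through the far endpoint of its nook edge, which is a reflex vertex $v$ (resp.\ $v'$) of the blocker itself, and then continues to $e_{\texttt{right}}$. You say this yourself: ``the wedge's sides run along the blocker edges up to the reflex vertices.'' So the closure of the wedge does have a vertex of $P$ in the interior of two of its edges. Lemma~\ref{lem:FullyVisible} then says this closed triangle is \emph{not} fully visible: two points on the same side, on opposite sides of $v$, cannot see each other. No genericity, thinness or placement condition from Section~\ref{sub:synthesis} can remove $v$ and $v'$, because they are what determine the wedge. So the ``main obstacle'' you defer is not the real difficulty, and the real one cannot be handled the way you propose.

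The repair is to drop the closure and use open-visibility directly. The open segment from $b_i$ to any point of a side beyond $v$ contains $v$. Hence $\Vis(b_i)$ is the closed triangle minus the two half-open pieces of the sides beyond $v$ and $v'$, and you should check full visibility on this set itself. Take two of its points; the segment between them lies in the triangle.
\begin{itemize}
\item Its relative interior cannot contain the apex $b_i$, since $b_i$ is an extreme point of the triangle.
\item If it contained $v$, then, because near $v$ the triangle is a half-plane bounded by the side through $v$, the segment would lie on that side with one endpoint beyond $v$. That endpoint is excluded from $\Vis(b_i)$.
\item The same argument disposes of any other vertex of $P$ lying on a side beyond $v$ or $v'$.
\item The blocker definition keeps foreign edges, and hence foreign vertices, out of the interior of the wedge, and $e_{\texttt{right}}$ has no vertex in its interior.
\end{itemize}
This is what the paper's ``clearly, any two points within this region can see each other'' amounts to, and with it your Steps~2 and~3 go through.

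A minor point: the overlap case in Step~3 cannot occur. With threshold-many birdhouses and threshold-many fully visible regions covering $P$, each holding at most one birdhouse, a birdhouse lying in two regions would force the remaining birdhouses into too few regions.
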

\begin{proof}
    For any blocker $B_i$, $b_i$ can see the edges of the blocker incident to it.
    The rays of $b_i$ along the incident edges must intersect only $e_\texttt{right}$, by the definition of blockers.
    As $e_\texttt{right}$ is a singular edge, this means that the visibility polygon $\Vis(b_i)$ is a triangular region consisting of a line segment along $e_\texttt{right}$, and both rays along the incident edges.
    Clearly, any two points within this region can see each other, so $\Vis(b_i)$ is a fully visible region.
    Then, by \Cref{lem:SpectatorReplacement}, there is an essentially-fixed spectator at $b_i$.
\end{proof}

\subsection{Central Chamber Spectators}
\label{sub:specs}

\begin{figure}
    \centering
    \includegraphics[page=7]{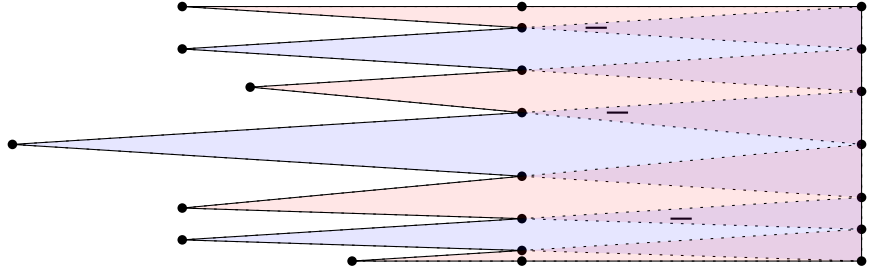}
    \caption{The construction of the triangular pockets that contain spectators at their convex vertex.
    Here, the bold line segments in the central chamber are arbitrarily thin blockers.}
    \label{fig:centralchamber2}
    \label{fig:centralchamber3}
\end{figure}
The central chamber is a large rectangular region with a set of blockers in its interior.
Per \Cref{sub:blockers}, each of the blockers must have an essentially-fixed spectator on its convex vertex.

We create a set of triangular pockets along the left edge of the central chamber.
In each pocket, our aim is to have a single spectator at its convex vertex.
Together, the visibility polygons of these spectators, along with those inside the blockers, should cover the entire central chamber.
Furthermore, only points on the right edge of the central chamber may be seen by multiple left-edge spectators.
All internal points must be seen by exactly one or two spectators.

To create these pockets, we do the following.
First, we select a set of points on the left edge that will be the reflex vertices between two pockets.
Notably, for each blocker, we select the point at the same y-coordinate as the blocker.
Additionally, pick some point in between each reflex vertex corresponding to a blocker such that no two consecutive reflex vertices correspond to a blocker.
Then, for every pair of consecutive reflex vertices on the left edge, we also pick a point on the right edge in between their y-coordinates. 
An example of picking such points is shown in \Cref{fig:centralchamber2}.

Now, we create a set of triangular pockets using the following process:
For each point on the right edge, draw a line through it and the two points on the left edge that have the closest y-coordinate above and below it.
Furthermore, draw a line through the top and bottom edge of the central chamber.
This creates a pattern of triangular pockets along the left edge, such that the visibility polygon has vertices at the chosen points on the right edge of the central chamber, as shown in \Cref{fig:centralchamber3}.

\begin{lemma}
\label{lem:centralchamber}
    Given a polygon $P$ with a central chamber and a set of blockers $B$.
    Let points $p_1, p_2, \dots, p_{k+1}$ be points on $e_\text{left}$ with $y(e_\text{top})=y(p_1)< y(p_2) < \dots < y(p_k+1)=y(e_\text{bottom})$ and points $y(e_\text{top})=q_0, q_1, q_2, \dots, q_{k+1}, q_{k+2}=y(e_\text{bottom})$ be on $e_\text{right}$ with $y(q_0)=y(q_1) < y(q_2) < \dots < y(q_k)=y(p_{k+1})$, where for $1 \leq i \leq |B|$, $y(p_{2\cdot i})=y(B)$.
    Let $l(p_i, q_j)$ be the unique line through $p_i$ and $q_j$.
    
    For each $1 \leq i < k$, the polygon contains a nook alongside $e_\text{left}$ consisting of the line segments $l(p_i, q_{i-1})$, $l(p_{i+1}, q_{i+1})$, ranging from $e_\text{left}$ to their intersection.
    Then, there is an essentially-fixed spectator at the convex vertex at the intersection of $l(p_i, q_{i-1})$ and $l(p_{i+1}, q_{i+1})$.
\end{lemma}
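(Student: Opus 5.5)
The plan is to apply the inductive framework of relative vision minimality and \Cref{lem:SpectatorReplacement}. The blocker spectators $b_1,\dots,b_{|B|}$ come first in the order; they are essentially fixed by the blocker lemma of \Cref{sub:blockers}. The pocket spectators follow. Write $s_i$ for the convex vertex of the $i$-th nook, that is, the intersection of $l(p_i,q_{i-1})$ and $l(p_{i+1},q_{i+1})$. The first step is to determine $\Vis(s_i)$. The nook is bounded by two segments lying on these lines, so from $s_i$ one sees exactly the wedge between the rays towards $p_i$ and towards $p_{i+1}$, extended until they hit $e_\texttt{right}$ at $q_{i-1}$ and $q_{i+1}$. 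The region seen is therefore the triangle-like region with apex $s_i$ and base $[q_{i-1},q_{i+1}]$ on $e_\texttt{right}$, minus the shadows cast by blockers.

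The second step shows that this wedge is covered by at most two fully visible regions. The blockers lie on horizontal lines $y=y(p_{2j})$, and the reflex vertices $p_i$ for odd $i$ were chosen so that no two consecutive reflex vertices belong to a blocker. Hence at most one blocker meets the wedge of $s_i$, and its thin hole lies along the ray from the left edge at height $y(p_{2j})$. The ray from $s_i$ through the tip of that blocker splits the wedge into two convex pieces. By \Cref{lem:FullyVisible}, each piece is fully visible once we check that no polygon vertex lies in the interior of its edges. The pieces are bounded by the nook edges, a segment of $e_\texttt{right}$, and the splitting ray, and the $q_j$ are vertices of no other polygon feature inside these open edges. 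If no blocker meets the wedge, $\Vis(s_i)$ is a single fully visible triangle.

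The third step establishes relative vision minimality, with the spectators ordered blockers first and then the pockets from top to bottom. Take any \boxx $q$ in the region attached to $s_i$ that is not already seen by earlier spectators. The nook is convex and its edges lie on the bounding lines, so any point of $\Vis(s_i)$ sees everything inside its fully visible piece. The part of $\Vis(s_i)$ not covered by earlier spectators is the piece outside the blocker spectator's narrow region and outside the wedge of $s_{i-1}$, whose boundary ray through $q_i$ or $q_{i-1}$ lies inside our wedge. That remaining part is a single fully visible region containing $q$, so $\Vis(s_i)\setminus\Vis(\text{earlier})\subseteq\Vis(q)\setminus\Vis(\text{earlier})$. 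Where two pieces remain, the same argument with the co-minimal pair version applies. \Cref{lem:SpectatorReplacement} then gives that each $s_i$ is essentially fixed, given the threshold count.

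The main obstacle is the geometric bookkeeping in the third step. We must verify that the overlap of consecutive wedges and the narrow blocker regions leave exactly one fully visible piece per pocket. We must also confirm that the index conventions $q_{i-1},q_{i+1}$ make adjacent wedges overlap only in a triangle meeting $e_\texttt{right}$. I would first settle this by a careful picture of three consecutive pockets around a blocker, and then argue that the configuration is the same in general.
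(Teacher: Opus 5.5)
Your single-spectator scheme has a genuine gap at the two pockets $s_{2j-1}$ and $s_{2j}$ whose common reflex vertex $p_{2j}$ is at the height of a blocker $B_j$.

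\textbf{Step 2 fails.} The blocker is a thin sliver lying along the horizontal line $y=y(p_{2j})$, inside the overlap triangle $p_{2j}q_{2j-1}q_{2j}$ of the two wedges. But $s_{2j-1}$ lies strictly above this line and $s_{2j}$ strictly below it. So no ray from $s_{2j-1}$ is aligned with the blocker. Whichever ray through a ``tip'' you take, the hole meets the interior of one of your two convex pieces, and that piece is not fully visible.

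\textbf{Step 3 fails, and no ordering repairs it.} Let $v$ be the left reflex vertex of $B_j$ and use your top-to-bottom order. The leftover region $\Vis(s_{2j-1})\setminus(\Vis(b_j)\cup\Vis(s_{2j-2}))$ contains two points that cannot see each other:
\begin{itemize}
\item a point $z_1$ slightly below and to the left of $v$. It is seen from $s_{2j-1}$ because that segment ends before reaching the hole, and it lies outside the wedge of $s_{2j-2}$, which stays above height $y(p_{2j})$.
\item a point $z_2$ slightly above the upper edge of $B_j$, to the left of the notch, hence outside $\Vis(b_j)$.
\end{itemize}
The segment $z_1z_2$ passes through the hole. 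So a solution can place two \boxxes in this leftover region. Then $s_{2j-1}$ is not relatively vision minimal, and the one-for-one replacement behind \Cref{lem:SpectatorReplacement} breaks. Whichever of $s_{2j-1}, s_{2j}$ you process first has the symmetric problem. Your closing sentence does not rescue this: co-minimality is a property of a pair of spectators. A single spectator whose leftover region needs two fully visible regions cannot replace two \boxxes.

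\textbf{The missing idea, which is the paper's proof.} Fix $\{s_{2j-1},s_{2j}\}$ together as one vision co-minimal pair, relative to the already essentially fixed spectator $b_j$. Their joint region $\Vis(s_{2j-1})\cup\Vis(s_{2j})$, with $\Vis(b_j)$ removed, is covered by two fully visible regions: the part above the line $y=y(p_{2j})$ and the part below it, each outside the hole. Removing the narrow region $\Vis(b_j)$ is what keeps each part convex around the notch. Two mutually invisible \boxxes in this union must then lie one in each part, so together they see everything the pair sees, and the pair can replace them. Only pockets not adjacent to a blocker are handled singly, as in your Step 1, since their visibility region is a fully visible triangle.
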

\begin{proof}
    
Let $r$ be the convex vertex in one of the triangular pockets along the left edge.
It is clear that a spectator would be essentially fixed at $p$ if there are no blockers incident to its visibility polygon.
In that case, its visibility polygon is simply a triangular region in its pocket and the central chamber.

However, by construction, each blocker is at the same y-coordinate as a reflex vertex between pockets.
As such, spectators in two consecutive pockets can each see part of the blocker.
Relative to the spectator in the blocker being essentially fixed, these two spectators are vision co-minimal, as the union of their visibility polygons can be covered by two fully visible regions.
Specifically, by the region with y-coordinate greater than the reflex vertex between them, and the region with y-coordinate smaller than said vertex.
Note that no other blockers may be adjacent to their visibility polygons by construction.
As such, all the spectators in the pockets along the left edge are essentially fixed.
\end{proof}

\begin{figure}
    \centering
    \includegraphics[page=22]{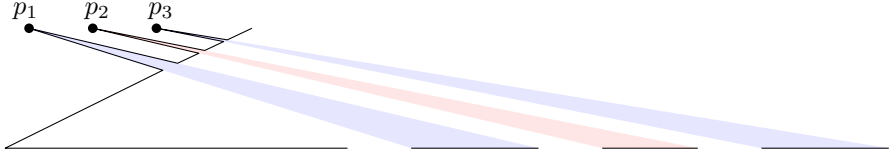}
    \caption{Ensuring all of $e_\texttt{top}$ and $e_\texttt{bottom}$ are covered. The spectator at $p_1$ covers the first edge, the spectator at $p_2$ the second edge, etc. This construction can be made arbitrarily thin.}
    \label{fig:guard_edges}
\end{figure}

Now, to ensure that the entire central chamber is covered, we only need to ensure that we guard the top and bottom edge of the central chamber.
While the outermost spectators of the triangular gadgets on the left can see these edges in a vacuum, they cannot if we add intermittent gadgets to these edges.
Adding gadgets creates new vertices on these edges that block visibility.
As such, for each line segment, we add a small new triangular pocket within the top or bottom pocket on the left edge respectively, such that a spectator in these pockets will always guard the line segment.
An example of one these pockets is displayed in \Cref{fig:guard_edges}.
Note that we ensure the vertices of the edges are always guarded from within the gadget; for our purposes it is enough to guard the edge.
Finally, we ensure the edge with the small pockets is slightly curved.
This way, the vertex in its corner can see the entire edge.
It does make its visibility region not convex.
However, as the points in the small pockets ($p_1, p_2, p_3$, etc.) are essentially fixed, it will be essentially fixed as well.

\section{Gadgets}
\label{sub:gadgets}

For the gadgets, we may now assume the central chamber is fully covered.
We now ensure that each gadget is entirely covered (barring any \boxx segments) and that if some \boxx sees outside of the gadget, it only sees to the right, where it can never observe any other \boxxes.

\subsection{Variable Gadget}
\label{sub:VariableGadget}

\begin{figure}[t]
  \centering
    \centering\includegraphics[page=10]{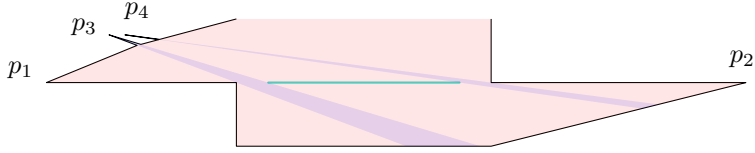}
    
  \caption{The basic construction of the variable gadget with four spectators and one \boxx segment.
  For each spectator, its visibility region is shown.
  Note that the \boxx segment can be put arbitrarily close to the top edge of the gadget, in case it needs to be seen from another gadget at an angle.}
  \label{fig:varGadget} 
\end{figure}

In this section, we are describing the variable gadget. We will construct a small part of a polygonal domain such that there is a \textit{\boxx segment} (or variable segment). 
The \boxx segment contains without loss of generality at least one \boxx in an optimal \boxx placement. 
Furthermore, the \boxx is free to be anywhere on the \boxx segment. We can then use these \boxx segments to encode variables that are free to take any value within a specific range. 

\textbf{Description.}
    The idea of the construction is to enforce that in every optimal \boxx placement, there is one \boxx placed on the \boxx segment, which then represents one variable taking a specific value within its range. 
    This is achieved by having spectators such that the union of their visibility regions includes the whole gadget except for the \boxx segment. 
    This way, the final \boxx in the gadget must be placed on the \boxx segment. The construction is visualized in Figure~\ref{fig:varGadget}. 
    The spectators are on the convex vertices in the four triangular pockets. 
    The boundary of the polygonal domain is black and the union of the four visibility polygons are shown in orange, while the \boxx segment is shown in turquoise.
    We point out that this construction is highly flexible, as the \boxx segment can be put at a different rotation or be made longer or shorter, depending on the positions of the triangular pockets.
    The threshold for this gadget equals $5$.

    We interpret the value of the left most point of the \boxx segment as $-1$, the point of the right most point as $+1$ and any point is defined through linear interpolation, see Figure~\ref{fig:varGadget}.

  

\begin{lemma}
    The variable gadget has a threshold $t(G_\texttt{variable}) = 5$, i.e., if there are five spectators in the gadget from \Cref{fig:varGadget}, then there must be a spectator on the variable segment.
\end{lemma}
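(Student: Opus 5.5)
We establish the two directions implied by ``threshold $5$'': the gadget (together with the part of the central chamber it sees) can hold at most five \boxxes, and whenever it holds five, the four spectator positions are essentially fixed, which forces the fifth \boxx onto the \boxx segment.

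\textbf{Upper bound.} First we exhibit a cover of the gadget by five fully visible regions and invoke \Cref{lem:UpperBound}. Four of the regions are the triangular visibility regions of the four pocket spectators of \Cref{fig:varGadget}, each a convex triangle bounded by the two pocket edges and part of the opposite boundary. The fifth is a thin convex quadrilateral containing the \boxx segment. For each region we check \Cref{lem:FullyVisible}: no polygon vertex lies in the interior of an edge of the region. Any region that would have such a vertex on its boundary we shrink slightly, or assign the boundary to a neighbouring region, since regions may overlap. The part of the gadget already visible from the essentially fixed spectators of the central chamber (\Cref{sub:specs}) can be ignored, because no additional \boxx can be placed there.

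\textbf{Spectators are fixed.} Next we apply \Cref{obs:vision-minimal-replacement} and \Cref{lem:SpectatorReplacement} to each pocket. The convex vertex of a pocket sees exactly its fully visible triangle, and any other point of that triangle sees a superset of it. So the vertex is vision minimal, possibly only relative to spectators already fixed, if two pockets share a view. Hence in any solution with five \boxxes in the gadget, exactly one \boxx lies in each of the five regions, and the four pocket \boxxes may be replaced by the convex vertices.

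\textbf{Fifth \boxx.} By construction, the union of the four spectator visibility regions, together with the central-chamber spectators, covers the whole gadget except the \boxx segment. The \boxx segment is excluded because its points are blocked exactly by the open-visibility condition: the segments through them pass through pocket reflex vertices. The fifth \boxx must not be seen by any spectator, so it lies on the \boxx segment. Conversely, every point of the segment sees none of the four spectators, since each line of sight hits a vertex. Therefore every position on the segment is feasible, and the variable can take any value in $[-1,1]$.

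The main obstacle is this last geometric verification. We must show that the four visibility regions cover everything except the segment, and that the segment's endpoints and interior are hidden precisely because the spectator sight lines graze reflex vertices. We would handle it by placing the pockets so that the supporting lines of their edges pass through the endpoints of the \boxx segment and coincide with the segment's line. Then each segment point is hidden by open-visibility, while any point off the segment lies strictly inside some visibility triangle.
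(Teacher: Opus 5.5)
\textbf{Gap: boundary coverage under open-visibility.} Your overall structure matches the paper: fix the pocket spectators by (relative) vision minimality, observe that only the \boxx segment remains uncovered, and use that the segment is itself fully visible. The proposal breaks at the step you flag as the main obstacle, and the placement you sketch does not repair it. Under open-visibility, a spectator at the convex vertex $p$ of a straight triangular pocket with mouth vertices $r_1,r_2$ does not see the following:
\begin{itemize}
\item the extensions of $pr_1$ and $pr_2$ beyond $r_1,r_2$, or the points where these extensions hit the boundary;
\item any boundary edge or vertex of the gadget that does not face the pocket mouth.
\end{itemize}
Moreover, no boundary point of $P$ can lie ``strictly inside'' a visibility triangle. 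Aligning pocket edges with the segment's line only shows that the segment is hidden. It does not show that every other point of the gadget is seen, in particular its edges and vertices. The paper handles exactly this by slightly curving the top-left edge, so that $p_1$ sees all edges and vertices not seen by the other spectators.

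\textbf{Consequences of the fix.} That fix has two consequences your proposal does not accommodate.
\begin{itemize}
\item $\Vis(p_1)$ is no longer a fully visible convex triangle. This is why the paper says $p_1$ becomes essentially fixed only after all other \boxxes are. Your five-region cover, which uses the four spectator regions themselves as fully visible regions, is therefore not available for the upper bound. The paper gets the bound differently: once the four spectators are fixed, only the \boxx segment is uncovered, and any \boxx on it sees the whole segment, so at most one more \boxx fits.
\item Your justification ``the convex vertex sees exactly its fully visible triangle, and every other point of it sees a superset'' fails for $p_1$. It also fails for $p_3,p_4$, whose views reach into the central chamber. You need the ordered relative-minimality argument: $p_2$ is vision minimal outright; $p_3$ and $p_4$ become vision minimal once $p_2$ and the central-chamber spectators are fixed; and $p_1$ becomes vision minimal only after all the others are fixed.
\end{itemize}
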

\begin{proof}
    Figure~\ref{fig:varGadget} describes the polygonal region $Q$, the points $p_1,\ldots,p_4$ and their visibility polygons $C_1,\ldots,C_4$.
    We see that $p_2$ is vision minimal.
    Furthermore, given $p_2$ is already essentially fixed and that the central chamber is fully covered, then $p_3$ and $p_4$ are vision minimal and become essentially fixed as well.
    Finally, due to the slight curvature of the top left edge, $p_1$ only becomes essentially fixed after all other \boxxes are essentially fixed.
    This curvature ensures that $p_1$ covers all the edges and vertices not covered by another \boxx.
    Afterward, the only region of the gadget that is left uncovered is the \boxx segment.
    As such, a fifth \boxx must be placed upon the \boxx segment and can thus be seen as encoding a number in $[-1, 1]$.
    As any \boxx on the \boxx segment will contain the entire \boxx segment in its visibility polygon, the gadget may contain at most five \boxxes.
    So, we see the threshold $t(G_\texttt{variable}) = 5$.
\end{proof}

We will not only use the variable gadget along $e_\text{bottom}$, but also within other gadgets to encode meta variables used to encode specific operations.
In this context, we will call the variable gadget a \textit{critical segment} instead.

\subsection{Copy and Scaling Gadget}
\label{sub:Scaling}

\begin{figure}[t]
    \centering
    \includegraphics[page =13]{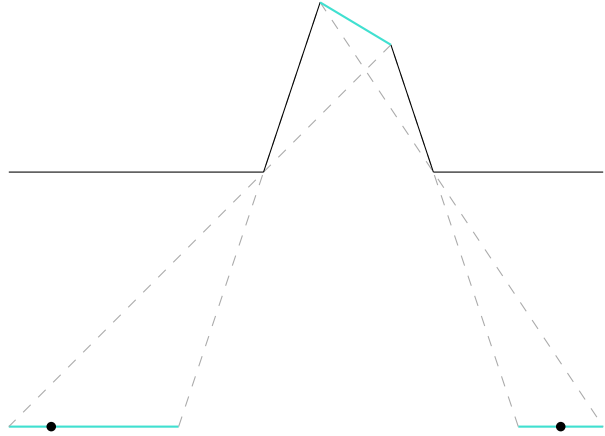}
    \caption{The $\leq$-scaling gadget is a nook in the upper part of the polygon. 
    It consists of two pivot points and a critical segment. 
    The gadget is connected to two \boxx segments.}
    \label{fig:copy-labels}
\end{figure}

Next, we will describe the copy and scaling gadget. 
Note that the same gadget can be used for copying a variable and scaling a variable.
We need this gadget in order to be able to have multiple instances of the same variable, e.g. if it appears in multiple constraints and has to obey all of them at the same time. 
Additionally, this gadget allows for scaling of the variable output. 
The basic idea is having two \boxx segments at the bottom of the polygon, one for the original variable $x$ and one for the copied variable $x'$. We then need two separate gadgets at the top of the polygon to make sure that $x\leq x'$ and $x\geq x'$, respectively. Given the value of $x$, we use these gadgets to guarantee that a box can be placed in both of them iff $x'=x$.

We will also show how to construct a gadget that can scale a variable, i.e., it enforces $\alpha x \leq y$ or $\alpha x \geq y$.
We begin with a sketch of the gadget, which is useful as it highlights specific aspects and notation and neglects details that we will fill in later.
We want to point out that the versions for ``$\leq$'' and ``$\geq$'' look different.

\textbf{Components and Description of the Copy and Scaling Gadget.}
As mentioned already, this gadget is split into the $\leq$-scaling gadget and the $\geq$-scaling gadget.
To get a $=$-scaling gadget, we simply connect a pair of \boxx segments to both a $\leq$-scaling gadget and a $\geq$-scaling gadget.
We assume the left \boxx segment represents the variable $x$ and the right \boxx segment represents the variable $y$.

We first present the common setup for both the $\leq$-scaling gadget and the $\geq$-scaling gadget.
Then, we will discuss how the two deviate and the impact on the gadget.

Both gadgets have two pivot points on $e_\text{top}$ with x-coordinates such that they are always in between the \boxxes representing $x$ and $y$, regardless of which values $x$ and $y$ take.
Note that, because the two pivot points are on $e_\text{top}$, they are on a line parallel to the \boxx segments of $x$ and $y$.
We also create a critical segment whose endpoints are constructed as follows.
Take the endpoint of the \boxx segments and shoot rays through the  pivot points.
The intersection points of the rays define the critical segment.

The two gadgets differ in the direction from which vision is obscured with regards to the pivot points.
If the region between the pivot points is within the polygon, we have a $\leq$-scaling gadget, as depicted in \Cref{fig:copy-labels}.
Likewise, if the region between the pivot points is a hole in the polygon, we have a $\geq$-scaling gadget, as depicted in \Cref{fig:copy-labels-reverse}.
We note that the illustrations in the figure also display that everything but the critical segment will be covered by the visibility polygons of the spectators.

\begin{figure}[tph]
    \centering
    \includegraphics[page=23]{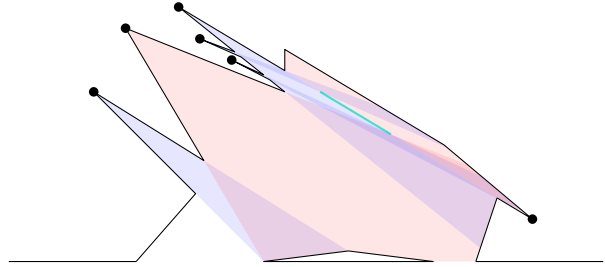}
    \caption{Construction of the $\geq$-gadget. The black disks represent the position of the spectators, which we denote by $p_1, p_2, \dots, p_6$ in counterclockwise order.}
    \label{fig:copy-labels-reverse}
\end{figure}


\textbf{Calculation.}
We will now prove that if the rays 
from $x$ and $y$ through $p$ and $q$ intersect on the critical segment, then $x$ and $y$ encode the same value.
To this end, we say $x_{i}$ is the point at which $x$ would encode $-1 \leq i \leq 1$.
We define $y_i$ analogously.
\Cref{fig:copy-calculations} displays the gadget schematically, where $p$ and $q$ are the pivot points and $s$ and $t$ are the endpoints of the critical segment.

\begin{figure}[th]
    \centering
    \includegraphics[page = 16]{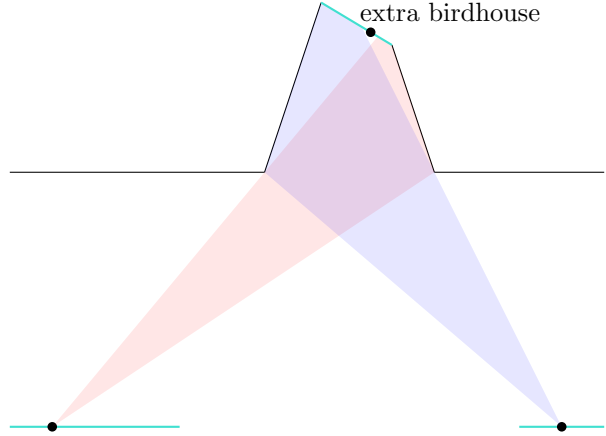}
    \caption{If the \boxxes are at a good position, we can place an additional \boxx{} inside the gadget.}
    \label{fig:visibility-1}
\end{figure}

\begin{lemma}[{\cite[Lemma 27]{AAM22}}]
\label{lem:calc_copy}
    It holds that $\frac{{\|x_{-1}-x\|}}{\|x_{-1}-x_{1}\|} = \frac{{\|y_{-1}-y\|}}{\|y_{-1}-y_{1}\|}$.
\end{lemma}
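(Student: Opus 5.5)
The statement is implicitly conditioned on the situation of \Cref{fig:copy-calculations}: the ray from the \boxx $x$ through the pivot $p$ and the ray from the \boxx $y$ through the pivot $q$ meet in a point $z$ on the critical segment $st$. Here $s$ is the intersection of the rays $x_{-1}p$ and $y_{-1}q$, and $t$ is the intersection of the rays $x_{1}p$ and $y_{1}q$. The plan is to prove the identity with elementary projective geometry. The \boxx segments of $x$ and $y$ lie on lines parallel to $e_\texttt{top}$, and $e_\texttt{top}$ contains $p$ and $q$. Central projection through a point maps one line affinely onto a parallel line. Using this, I will transport the relative position of $x$ on its segment to the relative position of $z$ on $st$, and back to $y$.

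\textbf{Step 1.} Consider the central projection $\pi_p$ with centre $p$. It sends the line $\ell_x$ of the $x$-segment to the line $\ell_{st}$ through $s$ and $t$. By construction $\pi_p(x_{-1})=s$ and $\pi_p(x_1)=t$, and by assumption $\pi_p(x)=z$. A central projection between two lines is a projective map, so it preserves cross-ratios. To get a plain ratio out of it, I will use the point at infinity of $\ell_x$. Since $\ell_x$ is parallel to the line $pq$, its point at infinity projects to the intersection $w=\ell_{st}\cap pq$, possibly at infinity. Equivalently, I can parametrise by height above $e_\texttt{top}$ and apply the intercept theorem. The same reasoning applies to $\pi_q$ with centre $q$, mapping $\ell_y$ to $\ell_{st}$ with $y_{-1}\mapsto s$, $y_1\mapsto t$, $y\mapsto z$. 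Its point at infinity also goes to the same $w$, because $\ell_y$ is parallel to $pq$ as well.

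\textbf{Step 2.} From Step 1, both sides have the same cross-ratio:
\[
(x_{-1},x_1;x,\infty_{x}) = (s,t;z,w) = (y_{-1},y_1;y,\infty_{y}).
\]
The left and right cross-ratios are exactly the affine ratios $\frac{\|x_{-1}-x\|}{\|x_{-1}-x_1\|}$ and $\frac{\|y_{-1}-y\|}{\|y_{-1}-y_1\|}$, taken with sign. This gives the claim. If the paper prefers to avoid projective language, I will instead write coordinates. Put $e_\texttt{top}$ on the $x$-axis and $\ell_x=\ell_y$ on the line of height $-h$. Write $x=x_{-1}+\lambda(x_1-x_{-1})$. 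Then the line $xp$ is explicit, and intersecting it with $yq$ gives $z$. I then check that requiring $z\in st$ forces $\mu=\lambda$, where $\mu$ is defined analogously for $y$.

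\textbf{Main obstacle.} The delicate point is that the cross-ratio step on $\ell_{st}$ needs the two projections to agree on the fourth reference point, namely that both send infinity to the same $w$. This is exactly where parallelism of the \boxx segments to $pq$ is used. In the coordinate version, the obstacle is a bilinear elimination; I expect it to factor as $(\lambda-\mu)$ times a nonzero term. Showing that term is nonzero, which comes down to $p\neq q$ and the rays not being parallel, is the point I would check most carefully. Since this is \cite[Lemma 27]{AAM22}, I would also cross-check the final ratio against that source.
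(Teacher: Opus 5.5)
Your proof is correct, but it takes a different route from the paper. The paper stays with the intercept theorem. It reduces the claim to $\frac{x_{-1}x}{xx_1}=\frac{y_{-1}y}{yy_1}$ and then chains three intercept-theorem identities through auxiliary points $a,b,c,d,e,r$, which are defined only in Figure~\ref{fig:copy-calculations}.

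You instead use invariance of the cross-ratio under the central projections $\pi_p\colon\ell_x\to\ell_{st}$ and $\pi_q\colon\ell_y\to\ell_{st}$. You also isolate the one fact that makes this work. Both segment lines are parallel to $pq$, so both projections send the point at infinity to the same point $w=\ell_{st}\cap pq$. Hence $\pi_q^{-1}\circ\pi_p$ is a projectivity $\ell_x\to\ell_y$ fixing infinity, that is, an affine map with $x_{\pm1}\mapsto y_{\pm1}$, and it therefore preserves the ratio.

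What your approach gains:
\begin{itemize}
\item It treats the copy case ($w$ at infinity, $st\parallel e_{\texttt{top}}$) and the genuine scaling case ($w$ finite) uniformly.
\item It makes the role of parallelism explicit.
\item It identifies $x\mapsto y$ as the order-preserving affine correspondence between the segments. This makes the inequality versions used right afterwards for the $\le$- and $\ge$-gadgets straightforward, not just the equality case.
\end{itemize}
The paper's version, in exchange, avoids projective language entirely.

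Three small points:
\begin{itemize}
\item With the standard convention, $(x_{-1},x_1;x,\infty)=\frac{x-x_{-1}}{x-x_1}=\frac{\lambda}{\lambda-1}$, not $\lambda$ itself. Up to sign this is exactly the paper's intermediate quantity $\frac{x_{-1}x}{xx_1}$, and it is injective in $\lambda$, so nothing breaks.
\item Your opening remark that central projection is affine onto a parallel line covers only the copy case. For scaling, the cross-ratio step with a finite $w$ in Step~1 is what is really needed, and you do supply it.
\item The coordinate fallback is unnecessary. The only nondegeneracy to check is $s\neq t$ and $p,q\notin\ell_{st}$. This holds because no endpoint $x_{\pm1},y_{\pm1}$ lies on the line $pq$.
\end{itemize}
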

\begin{proof}
    Note that the statement holds if $\frac{x_{-1}x}{xx_1} = \frac{y_{-1}y}{yy_1}.$
    In this proof, we denote the distance between two points $u$, $v$ by $uv$ instead of $\| u-v \|_2$ for improved readability.
    
    We apply the intercept theorem to the pivot points $p$, $q$, $s$, and $t$ to get the following equations:
    \[\frac{x_{-1}x}{xx_1} = \frac{ce}{ac}, \quad
    \frac{y_{-1}y}{y_{-1}y_{1}} = \frac{cd}{bc}, \quad
    \frac{de}{cd} = \frac{pq}{pr} = \frac{ac}{bc} \]

Using the four equations, we get
\[\frac{x_{-1}x}{xx_1} 
= \frac{ce}{ac}
= \frac{ce \cdot cd}{ac \cdot cd}
= \frac{pr \cdot cd}{ac \cdot qr}
= \frac{ac \cdot cd}{ac \cdot bc}
= \frac{cd}{bc} 
= \frac{y_{-1}y}{yy_1}.\]

This finishes the proof.
\end{proof}

\begin{figure}[t]
    \centering
    \includegraphics[page =3]{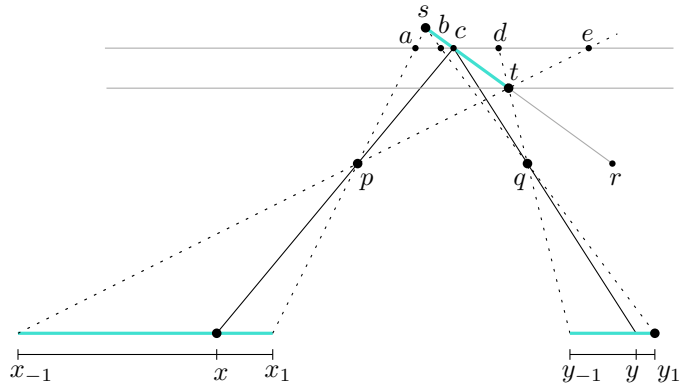}
    \caption{The figure contains the points and length to show that the scaling gadget enforces a linear constraint.}
    \label{fig:copy-calculations}
\end{figure}

Now assume that we know a \boxx must be placed on the critical segment and consider the $\leq$-scaling gadget. By Lemma~\ref{lem:calc_copy} and the fact that we can always move the \boxxes further apart from each other to create more space on the critical segment, it must indeed hold that $x\leq x'$. Looking at the $\geq$-scaling gadget on the other hand, by the same logic and the fact that we can move the two \boxxes closer together to create more space on the critical segment, $x\geq x'$ must be true in this case.

\begin{lemma}
    The scaling gadget has a threshold $t(G_\texttt{scaling}) = 8$, i.e., if the variable segments for variables $x$ and $y$ are both connected to the $\leq$-gadget and the $\geq$-gadget, as in \Cref{fig:copy-labels} and \Cref{fig:copy-labels-reverse} respectively and there are eight \boxxes in these gadgets, then $x = y$.
\end{lemma}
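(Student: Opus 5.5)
We prove the lemma by splitting the count of eight \boxxes between the two nooks in a fixed way: we intend this threshold to count only the \boxxes inside the $\leq$-nook and the $\geq$-nook, not the variable \boxxes on the \boxx segments of $x$ and $y$, which are already counted in their variable gadgets. From \Cref{fig:copy-labels} the $\leq$-gadget should carry three spectators plus one critical \boxx, and from \Cref{fig:copy-labels-reverse} the $\geq$-gadget carries $p_1,\ldots,p_6$. If that count comes out inconsistent, we instead take the split $3+1$ and $3+1$ with some spectators shared. In either case the argument has three parts: an upper bound, the essential fixing of the spectators, and the geometric consequence of the critical \boxxes.

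\textbf{Upper bound.} For each nook we exhibit a cover by fully visible regions whose total number equals the threshold, and apply \Cref{lem:UpperBound}. We check full visibility with \Cref{lem:FullyVisible}: each convex piece is chosen so that no polygon vertex lies in the interior of its edges. The critical segment, together with the small region around it, forms one fully visible region in each nook, exactly as for the \boxx segment in the variable gadget. Combined with the thresholds of the other gadgets, this forces every nook to contain exactly its threshold number of \boxxes in any solution reaching the global threshold.

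\textbf{Fixing the spectators.} We process the spectators of each nook in an explicit order and verify relative vision minimality at each step, then invoke \Cref{lem:SpectatorReplacement}. We start with the deepest pocket, which is vision minimal. We then use the fact that the central chamber is already covered to fix the next pockets, and we fix last the spectator on a slightly curved edge, as in the variable gadget. After this, the union of the spectators' visibility regions covers the whole nook except the critical segment. Hence the remaining \boxx of each nook lies on its critical segment.

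\textbf{Geometric consequence.} A point $z$ on the critical segment is hidden from $x$ exactly when the segment $xz$ is blocked at a pivot: in the $\leq$ case by the polygon vertex at the pivot itself, and in the $\geq$ case by the hole between the pivots. By \Cref{lem:calc_copy}, the ray from $x$ through one pivot and the ray from $y$ through the other meet the critical segment at points corresponding to the same parameter if and only if the encoded values agree. Moving $x$ or $y$ changes which part of the critical segment is free.
\begin{itemize}
\item In the $\leq$-gadget, the unseen part of the critical segment is the part between the two rays. This part is nonempty, and can contain the \boxx, exactly when $x\le y$.
\item In the $\geq$-gadget, the hole reverses the roles of the rays, so a free point exists exactly when $x\ge y$.
\end{itemize}
Together the two gadgets give $x=y$. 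The scaling version follows by placing the $y$ segment with length ratio $\alpha$.

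\textbf{Main obstacle.} We expect the hardest step to be the boundary behaviour at the rays. Open-visibility has to make the free set closed, so that at $x=y$ exactly one point is free, namely the point whose segments to $x$ and $y$ pass through the pivot vertices. We must also verify that the variable \boxxes see nothing of the nook beyond the critical segment region and the spectator-covered area. We would handle this by checking the one point of tangency explicitly.
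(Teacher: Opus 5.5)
There is a genuine gap in the counting step, which is what forces a \boxx onto each critical segment. You never settle how the eight \boxxes split between the two nooks. Your primary guess (three spectators plus a critical \boxx in the $\leq$-nook, and $p_1,\dots,p_6$ in the $\geq$-nook) already exceeds eight. Your fallback ``$3+1$ and $3+1$'' contradicts the six spectators $p_1,\dots,p_6$ of the $\geq$-gadget. In the construction the split is $7+1$:
\begin{itemize}
\item In the $\geq$-gadget, $p_1,p_3,p_4,p_6$ are vision minimal. Once these and the central chamber are fixed, $p_2,p_5$ become essentially fixed as well. So six spectators are fixed and everything in that nook except the critical segment is covered.
\item The $\leq$-gadget contains no spectators at all. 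It is convex, so it is a single fully visible region and holds at most one \boxx.
\end{itemize}
Only with this split does the argument close. After the six spectators, two \boxxes remain and the $\leq$-nook takes at most one. So one must lie on the $\geq$ critical segment, giving $x\geq y$, and the other lies somewhere in the $\leq$-nook.

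This also changes what you must prove about the $\leq$-gadget. Your plan assumes spectators in that nook cover everything but the critical segment, so that only the critical segment needs examining. Since no such spectators exist, you need something different. When $x>y$, the variable \boxxes at $x$ and $y$, looking through the opening between the pivots, must together see the entire convex $\leq$-nook, so no \boxx fits anywhere in it. When $x\leq y$, a point of the critical segment must stay hidden. With that replacement, the rest goes through as in the paper: your use of \Cref{lem:calc_copy} for the two inequalities, and your remark that open-visibility keeps the hidden set closed so that exactly one point survives at $x=y$.
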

\begin{proof}
    By \Cref{lem:calc_copy}, we see that if a \boxx is on the critical segment of both the $\leq$-gadget and the $\geq$-gadget, then $x \leq y \leq x$, so $x = y$.

    As such, we simply need to show that there must be a \boxx on each critical segment.
    Let $p_1, p_2, \dots, p_6$ denote the vertices denoted by black disks in \Cref{fig:copy-labels-reverse} in the $\geq$-gadget in counterclockwise order (starting from the left).
    We observe that $p_1$, $p_3$, $p_4$, and $p_6$ are vision minimal and thus contain essentially-fixed spectators.
    Furthermore, $p_2$ and $p_5$ are essentially fixed when the central chamber is completely covered by \Cref{lem:centralchamber} and $p_1$, $p_3$, $p_4$ and $p_6$ are essentially fixed.
    On the other hand, the $\leq$-gadget does not contain any essentially-fixed spectators.

    As such, there are two \boxxes remaining.
    Notice that, assuming the variable segments of $x$ and $y$ both contain a \boxx, the entire $\geq$-gadget is covered by visibility polygons except the critical segment.
    Furthermore, the $\leq$-gadget is convex and can never contain more than a single \boxx.

    As such, there must lie a vertex on the critical segment of the $\geq$-gadget, which enforces $x \geq y$.
    By \Cref{lem:calc_copy}, a \boxx can be placed on the $\leq$-gadget's critical segment if and only if $x \leq y$.
    Additionally, if $x \leq y$, no \boxx can be placed in the $\leq$-gadget at all.
    Then, it must be that both $x \leq y$ and $y \leq x$, so $x = y$ if the $\leq$-gadget and the $\geq$-gadget contain eight \boxxes in total.
\end{proof}

\subsection{Addition Gadget}
    \begin{figure}[tbhp]
        \centering
        \includegraphics[page = 17]{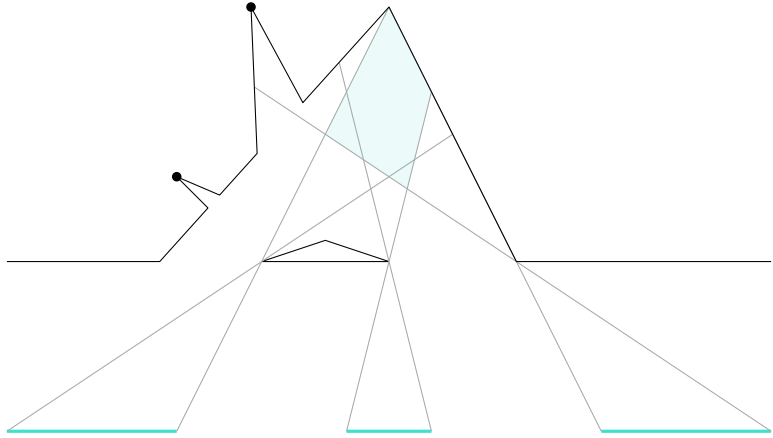}
        \caption{The $\geq$ addition gadget. The reverse gadget is the exact mirror image. The black disks represent the position of the spectators.}
        \label{fig:addition-labels}
        \label{fig:addition-reverse}
    \end{figure}

    First, we describe a gadget enforcing $x+y\geq z$.
    We call this the $\geq$-addition gadget.
    Thereafter, we describe the very similar 
    $\leq$-addition gadget enforcing $x+y\leq z$.
    
\textbf{Description of the $\geq$-Addition Gadget.}
    We want to point out that our addition gadget follows the ideas of Abrahamsen, Adamaszek and Miltzow closely~\cite{AAM22}.
    We repeat many of their ideas for the sake of self-containment and integration.
    The following description is illustrated in Figure~\ref{fig:addition-labels}.
    We assume that there are three variable segments representing the variables $x$, $y$, and $z$.
    This can be done using the scaling gadget as described in Section~\ref{sub:Scaling}.
    We call those three variable segments left, middle and right variable segment.
    We set the middle variable segment to be two times shorter than the others.
    The distance between the left and the middle variable segments is the same as the distance between the middle and the right variable segments.
    Furthermore, the points $p_1$, $p_2$, and $p_3$ are on $e_\text{top}$, are equidistant, and $p_2$ has the same x-coordinate as $z_0$.

    


    Instead of a critical segment, this gadget uses a \textit{critical region}.
    Unlike the critical segment, the critical region is a polygonal region.
    However, it is still critical in the sense that we may only place a \boxx in the region if and only if some constraint holds.
    Specifically, the critical region of the gadget is exactly the region for which variable assignments of $x$, $y$, and $z$ exist such that it is covered by none of $\Vis(x), \Vis(y)$, or $\Vis(z)$, nor any essentially-fixed spectators.

\textbf{Reverse Gadget.}
    Now, we shortly describe the reverse addition gadget enforcing the constraint $x+y \leq z$. 
    The gadget is the mirror image of the figure illustrated in Figure~\ref{fig:addition-reverse}.
    The gadget is a vertical mirroring of the normal gadget.
    In this section, we will usually only consider the $\leq$ addition gadget, as all arguments apply symmetrically to the $\geq$ addition gadget.
    
\textbf{Visibility of the $\geq$-Addition Gadget.}
    Let $p$, $q$, and $r$ be \boxxes on the left, middle and right variable segments.
    The more $p$ moves to the right (the larger the value it represents), the smaller the critical area that it sees.
    It sees the entire critical area at the left most position and nothing of the critical area at the right most position.
    It is reversed for the points $q$ and $r$. 
    The further $q$ or $r$ is moved to the right (the larger the value it represents), the \textit{more} they see of the critical area.
    Again at the endpoints, they see either nothing or everything of the critical area.
    Furthermore, if there is a spectator at the intended position and \boxx on the three bird segments then the entire gadget is covered, 
    except potentially some part of the critical area.

\textbf{Calculation.}
In order to study the constraint enforced by the gadget, we consider the situation where exactly one point, denoted by $t$, is not seen by the three \boxxes in the critical area.
We define the following values.
\begin{description}
    \item[$p, q, r:$] The \boxxes on the variable segments representing the values $x$, $y$, and $z$ respectively.
    \item[$C$:] We denote the distance between the midpoints between the left and middle bird segment by $C$.
    Note that this is the same as the distance between the middle and right bird segment by construction.
    \item[$s,S$:] the distance between $t$ to $p_2$ and $p_2$ to $r$ respectively.
    \item[$D_1$, $D_2$:] The distance from $p$ to $r$ is $D_1$ and the distance from $q$ to $r$ is $D_2$.
    \item[$\delta$:] We denote the distance between the left and the middle pivot points by $\delta$.
    Note that this equals the distance between the middle and the right middle point by constructions.
\end{description}

\begin{lemma}
\label{lem:calc_add}
    If exactly one point in the critical region is not covered by the visibility polygons of $x$, $y$, and $z$, then $x+y=z$.
\end{lemma}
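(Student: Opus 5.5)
The plan is to reduce the statement to a computation with similar triangles, in the same spirit as Lemma~\ref{lem:calc_copy}. When exactly one point $t$ of the critical region is unseen, $t$ must lie on the boundary of all three shadows simultaneously. Concretely, $t$ lies on the three lines $\ell(p,p_1)$, $\ell(q,p_2)$ (or the appropriate pivot for $q$), and $\ell(r,p_3)$, each of which bounds the shadow of one variable \boxx behind the corresponding pivot point. If $t$ missed one of these lines, the unseen set would be an open neighbourhood of $t$ intersected with a polygonal region, and hence would contain more than one point. So the first step is to argue that the three shadow boundaries are concurrent at $t$. Here the monotonicity described in ``Visibility of the $\geq$-Addition Gadget'' is used: moving $p$ right shrinks the unseen part of the critical region, while moving $q$ or $r$ right enlarges it.

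Next I set up coordinates. The line $e_\texttt{top}$ is $y=0$, the variable segments lie on a horizontal line $y=-H$, and the pivots are $p_1=(-\delta,0)$, $p_2=(0,0)$ and $p_3=(\delta,0)$. The point $t$ lies above $e_\texttt{top}$ at height $h$, and the variable points are $p=(-C+\lambda x,-H)$, $q=(\mu y,-H)$ and $r=(C+\lambda z,-H)$. Here $\lambda$ is half the length of the outer segments and $\mu=\lambda/2$, since the middle segment is half as long. The appropriate midpoint is shifted so that $z_0$ sits under $p_2$. I need to fix which pivot belongs to which variable from the figure; I would take $q$ through $p_2$, with $p$ and $r$ through the outer pivots.

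A line from a point at height $-H$ through a pivot $(c,0)$ reaches height $h$ at horizontal position $c+(c-u)\,h/H$, where $u$ is the $x$-coordinate of the \boxx. Writing $\kappa=h/H$, concurrency at $t=(t_x,h)$ gives three linear equations in $t_x$ and $\kappa$:
\[
-\delta(1+\kappa)-\kappa(-C+\lambda x)=t_x,\quad
-\kappa\mu y = t_x,\quad
\delta(1+\kappa)-\kappa(C+\lambda z)=t_x .
\]
Adding the first and third equations and comparing with twice the second eliminates $\delta$ and $C$. This yields $-\kappa\lambda(x+z)=-2\kappa\mu y$, so $x+z=y$ when $\mu=\lambda$. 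With the actual assignment of variables in the paper (and $\mu=\lambda/2$, hence a factor of two), the same elimination should produce $x+y=z$. The quantities $s$, $S$, $D_1$ and $D_2$ in the paper's notation are just intercept-theorem versions of these coordinates, so I would translate the final identity back into them.

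I expect the main obstacle to be the geometric bookkeeping rather than the algebra. I have to pin down from the figure which pivot each variable \boxx looks through, and verify that the chosen lengths and offsets make the $\delta$ and $C$ terms cancel exactly. I also need to justify rigorously that ``exactly one uncovered point'' forces concurrency of all three shadow boundaries, rather than $t$ being a corner formed by only two of them together with a fixed spectator's boundary. To settle the latter, I would first check that the spectators' visibility regions avoid the interior of the critical region by construction. Then only the three variable shadows can bound the uncovered set, and a single point requires three lines through it, since two lines alone leave a wedge that contains more than one point.
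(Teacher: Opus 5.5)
Your proposal is correct and is essentially the paper's argument: the paper likewise starts from the concurrency of the three rays through $p_1,p_2,p_3$ at $t$, then applies the intercept theorem to get $D_1=D_2$, i.e.\ the middle birdhouse is the midpoint of the outer two, and your elimination (first plus third equation equals twice the second) says exactly this in coordinates. One slip to fix: that elimination gives $\lambda(x+z)=2\mu y$, so the outer values sum to the middle value because $\mu=\lambda/2$ (not $\mu=\lambda$), and since the paper's proof puts $z$ on the half-length middle segment (its ray passes through $p_2$, with $z_0$ below $p_2$), the relation reads $x+y=z$ directly.
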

\begin{proof}
    If exactly one point $t$ in the critical region is not covered by the three visibility polygons, the rays from $p$, $r$, and $q$ through $p_1$, $p_2$, and $p_3$ respectively must intersect in $t$.
    By the intercept theorem, we see that $\frac{s}{S} = \frac{\delta}{D_1}$ and $\frac{s}{S} = \frac{\delta}{D_2}$.
    Together, they imply $D_1 = D_2$.
    Then, $D_1 = C-x +z/2$ and $D_2 = C -z/2 + y$, so \[C-x +z/2 = D_1 = D_2 = C -z/2 + y \Rightarrow -x + z/2 = -z/2 + y \Rightarrow x+y = z. \qedhere\]
\end{proof}

Note that this is for the case where the three visibility lines meet in a point in the critical area.
If the three do not meet in a single point, we can place a \boxx at any point $t$ in the critical region that is not covered.
Let this point be $t'$ and let $x'$, $y'$, and $z'$ be the values of the variables of the \boxxes on the three variable segments, such that their visibility lines meet in point $t'$.
Then, for the $\leq$ addition gadget, we see that $x' \leq x$, $y' \leq y$, and $z' \geq z$.
So, we see that $x + y \leq z$.
Likewise, for the $\geq$ addition gadget, it follows that $x' \geq x$, $y' \geq y$, and $z' \leq z$, and thus $x + y \geq z$.

As $x,y$ could be larger or $z$ could be made smaller and we would still see at least some point in the critical area, it holds that this gadget enforces 
$x +y \geq z$.

\begin{figure}
    \centering
    \includegraphics[page = 19]{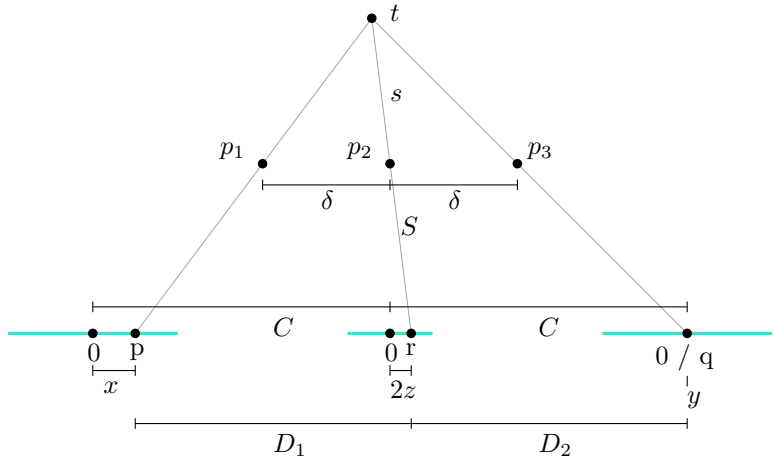}
    \caption{Illustration of different lengths in the addition gadget.}
    \label{fig:addition-calculation}
\end{figure}

\begin{lemma}
    The addition gadget has a threshold $t(G_\texttt{addition}) = 6$, i.e., if the variable segments for variables $x$, $y$, and $z$ are connected to a $\leq$ or $\geq$ addition gadget, as in \Cref{fig:addition-calculation}, and the gadgets contain six \boxxes total, then $x + y \leq z$ (or $x + y \geq z$).
\end{lemma}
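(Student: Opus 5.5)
We follow the same template as the threshold lemma for the scaling gadget: an upper bound from a cover by fully visible regions, and a lower bound from essentially-fixed spectators.

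First, we cover the addition gadget (the part of the polygon attached to $e_\texttt{top}$, excluding the central chamber) by six regions. Five are closed convex regions containing the five spectator positions (the black disks in \Cref{fig:addition-labels}, denoted $p'_1,\dots,p'_5$ in counterclockwise order), each chosen so that no polygon vertex lies in the interior of one of its edges; by \Cref{lem:FullyVisible} they are fully visible. The sixth is a convex region containing the critical region, bounded by the segments through the pivot points $p_1,p_2,p_3$ so that the pivots are corners rather than interior points of its edges, and again fully visible by \Cref{lem:FullyVisible}. \Cref{lem:UpperBound} then gives at most six \boxxes in the gadget, so any solution meeting the global threshold has exactly six here.

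Second, we show the five spectators are essentially fixed, using the inductive relative vision minimality argument of \Cref{lem:SpectatorReplacement}. The outer pocket spectators are vision minimal outright, since their visibility regions are small triangles. The remaining ones become relatively vision minimal once the central chamber is known to be covered (\Cref{lem:centralchamber}) and the earlier spectators are fixed; slightly curved edges, as in the variable gadget, let the last one cover the leftover edges and vertices. We then check that, with spectators at these positions and \boxxes on the three variable segments, the union of all visibility regions covers the entire gadget except possibly part of the critical region. This is the statement in the paragraph ``Visibility of the $\geq$-Addition Gadget''. Hence the sixth \boxx must lie in the critical region at a point $t$ not seen by $p,q,r$.

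Third, we translate the existence of $t$ into the inequality. The rays from $p$, $q$, $r$ through their pivots bound the unseen part of the critical region. Take the values $x',y',z'$ whose rays all pass through $t$. Monotonicity (moving $p$ right shrinks what it sees, moving $q$ or $r$ right enlarges it, and similarly for the other orientation) gives $x'\le x$, $y'\le y$, $z'\ge z$ in the $\leq$-gadget, and the reverse inequalities in the mirrored $\geq$-gadget. \Cref{lem:calc_add} applied at $t$ gives $x'+y'=z'$, hence $x+y\le z$ (respectively $x+y\ge z$). Conversely, if the inequality holds, the same monotonicity shows some point of the critical region is unseen, so six \boxxes fit.

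The main obstacle is the second step. We must verify geometrically that the spectators fixed in order really are relatively vision minimal, and that their visibility regions together with $\Vis(p),\Vis(q),\Vis(r)$ leave nothing uncovered outside the critical region for every choice of variable values. Otherwise a sixth \boxx could hide elsewhere. We would argue this region by region from the figure, mirroring the $\geq$-scaling gadget proof.
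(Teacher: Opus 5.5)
\textbf{There are two genuine gaps: the birdhouse count inside the gadget, and the direction of the inequality in your third step.}

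Your overall template matches the paper's: a fully visible cover for the upper bound, essentially fixed pocket spectators, the remaining birdhouse forced into the critical region, then Lemma~\ref{lem:calc_add}. The bookkeeping that produces the number six does not match. In the paper the addition gadget has only two triangular pockets. Apart from a sliver that is always seen from the variable segments, it is covered by three convex regions, so it holds at most three birdhouses. These are the two pocket spectators, which are vision minimal outright, and one birdhouse in the critical region. In the paper's proof, the six in the statement is these three together with the three birdhouses on the variable segments of $x$, $y$, $z$.

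Your five spectators $p'_1,\dots,p'_5$ and six fully visible regions do not correspond to the paper's gadget. For the actual gadget, your claim that a threshold-meeting solution has exactly six birdhouses inside it is false, since at most three fit. The inductive relative-vision-minimality argument, the dependence on the central chamber, and the curved edges in your second step all concern spectators that do not exist here. What has to be checked is only two things: the two pocket vertices are vision minimal, and with them plus the three variable birdhouses everything outside the critical region is seen.

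Your third step also draws the wrong inequality. From $x'\le x$, $y'\le y$, $z'\ge z$ and $x'+y'=z'$ you get $z\le z'=x'+y'\le x+y$, i.e.\ $x+y\ge z$, not $x+y\le z$. Symmetrically, the reversed primed inequalities you assign to the mirrored gadget give $x+y\le z$, not $x+y\ge z$.

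Moreover, these primed inequalities do not follow from the monotonicity you quote. If the birdhouse for $y$ sees more of the critical region as it moves right, then $t$ being unseen forces $y\le y'$, not $y'\le y$. Mixed directions for $x$ and $y$ yield no inequality at all.

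For the gadget enforcing $x+y\ge z$, you need the birdhouses for $x$ and $y$ both to see less as their values grow and the one for $z$ to see more. That gives $x\ge x'$, $y\ge y'$, $z\le z'$, and hence $x+y\ge z$; the mirrored gadget gives the reverse. The paragraph preceding the lemma in the paper contains the same slip, so it has to be fixed rather than copied. As written, your third step does not establish the claimed inequality for either gadget.
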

\begin{proof}
    We note that the addition gadget (minus a small part that must always be covered by a variable gadget) can be covered by three convex regions, so no more than three \boxxes can occupy the gadget.
    Note that this holds for both the $\geq$ and $\leq$ variants, as they are symmetric.
    Furthermore, there will be an essentially-fixed spectator at the vertices inside the two triangular pockets, as they are vision minimal.
    So, we may assume that there are \boxxes at $p_1$, $p_2$ and each of the three variable segments.
    By \Cref{lem:calc_add}, we can only fit a third \boxx in the gadget if $x + y \leq z$ (or $x + y \geq z$ respectively).
\end{proof}

\subsection{Curved Gadget}
\begin{figure}
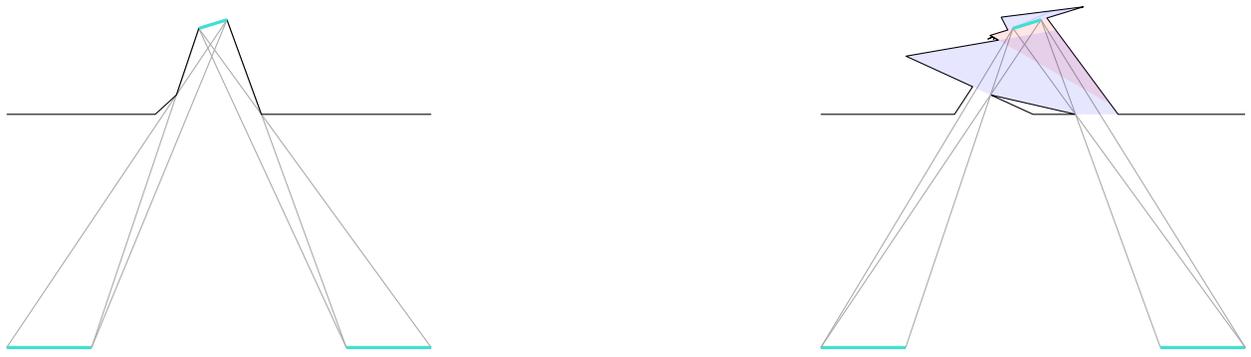

    \centering
    \includegraphics[page = 20]{figures/Gadgets_Birds.pdf}
    \hfill
    \includegraphics[page = 21]{figures/Gadgets_Birds.pdf}
    \caption{The curved gadget with the visibility regions of the spectators in the gadget. Note that the two pivot points are at different heights. Left: $\leq$-gadget with variable segments shown. Right: $\geq$-gadget with visibility regions of the spectators shown.}
    \label{fig:curved-gadget}
\end{figure}
As described in \Cref{subsec:ccsp}, the curved gadget should ensure that for two \boxxes on equally-sized variable segments representing $x, y$, $x = \frac{a y + b}{c y + d}$.
To this end, we show that our gadgets induce M\"obius transformations, which are of the desired form.

\textbf{Description of the $\geq$-Curved Gadget.}
Curved gadgets largely resemble scaling gadgets.
The main difference between the two gadgets is that one of the reflex pivot points is above $e_\texttt{top}$, instead of both being on this edge.
Otherwise, the strategy to guard this gadget is largely the same as with the scaling gadget.
In addition, this gadget also has two variants, one for $\geq$ and one for $\leq$.
We emphasize that in the construction, we ensure one variable segment is strictly to the left of the curved gadget, while the other is strictly to the right.

An example of the curved gadget is shown in \Cref{fig:curved-gadget}, while a schematic drawing is shown in \Cref{fig:curved_calc}.
The curved gadget is located on the upper edge of the central chamber, while the variable \boxx segments are on the bottom of the central chamber.
There are two \boxx segments that interact with the curved gadget.
We can send a ray from \boxx $x$ through $p$. 
Let $r$ be the intersection of this ray with the \boxx segment of the curved gadget.
The \boxx may only be at this point on the segment or to the left of it, otherwise it will be in the visible region of $r$.

\textbf{Reverse Gadget.}
Like the scaling gadget, the $\geq$ curved gadget is constructed by adding a hole between the pivot points and ensure that the \boxxes can see past the pivot points on the outer side.
The reverse gadget is displayed in \Cref{fig:curved-gadget} as well, together with the visibility regions of its spectators.

\textbf{Calculation.}
To show the curved gadget indeed encodes a curved constraint, we assume that both variable segments are on the x-axis and thus have a $y$-coordinate of $0$.
Likewise, we assume the critical segment within the gadget lies on the line $\ell : y=kx+l$.
Now, the curved gadget encodes the following construction:
\begin{enumerate}
    \item Draw the line through $x$ and $p$ and let $r$ be its intersection with $\ell$.
    \item Draw the line through $r$ and $q$ and let $y$ be its intersection with the $x$-axis.
    \item Now the curved gadget encodes the constraint $f(x)$, where $f(x)$ is the $x$-coordinate of $y$.
\end{enumerate}

Each step of this process defines a central projection between lines, which are known to be M\"obius transformations, which are of the form $\frac{ax+b}{cx+d}$.
The composition of two M\"obius transformations are also M\"obius transformations, so $f(x)$ must be of the form $f(x)=\frac{ax+b}{cx+d}$.

\begin{figure}
    \centering
    \includegraphics[page=3]{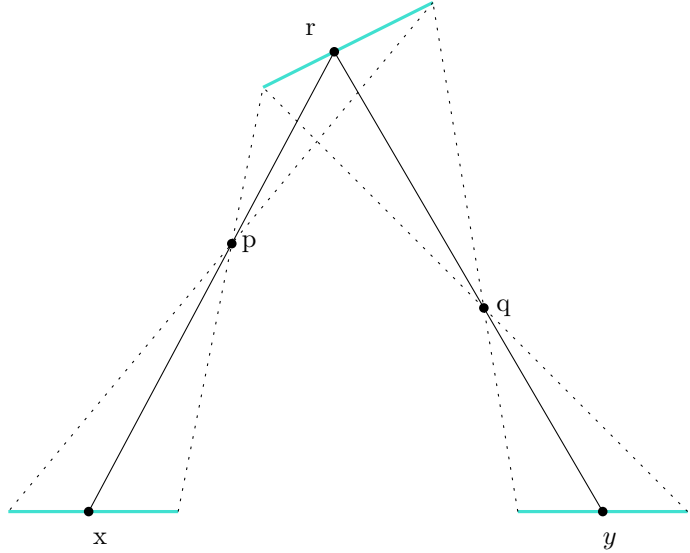}
    \caption{Schematic drawing for the calculation of the curvedness.}
    \label{fig:curved_calc}
\end{figure}
\begin{lemma}
\label{lem:nonaffine}
    $f(x)=\frac{ax+b}{cx+d}$ is not an affine transformation.
\end{lemma}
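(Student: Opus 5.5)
We plan to show that the composed central projection $f$ is affine exactly when the auxiliary line $\ell$ is parallel to the line $L$ through the two pivot points $p$ and $q$. Since the construction places the pivots at different heights, $L$ is not horizontal. The critical segment can then be chosen so that $\ell$ is not parallel to $L$, which makes $f$ non-affine.

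The idea is to use the projective characterisation of Möbius maps. We view each step as a perspectivity between projective lines: first from the $x$-axis $X$ to $\ell$ with centre $p$, and then from $\ell$ back to $X$ with centre $q$. The composition $f$ is a projective map $X\to X$. Such a map $\frac{ax+b}{cx+d}$ is affine, i.e.\ $c=0$, if and only if it fixes the point at infinity of $X$. So the key step is to track where the point at infinity $\infty_X$ of the $x$-axis goes.

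The line through $p$ parallel to $X$ meets $\ell$ in a finite point $r_\infty$. Because $p$ lies on $e_\texttt{top}$, this line is the line of $e_\texttt{top}$, and $r_\infty=\ell\cap e_\texttt{top}$, provided $\ell$ is not horizontal. The second perspectivity sends $r_\infty$ to the intersection of the line $r_\infty q$ with $X$. This image is $\infty_X$ if and only if the line $r_\infty q$ is horizontal, i.e.\ $q$ lies on the line of $e_\texttt{top}$. Since $q$ sits strictly above $e_\texttt{top}$, we get $f(\infty_X)\neq\infty_X$, hence $c\neq 0$, and $f$ is not affine.

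As an alternative, we would do a direct computation. Set $p=(p_x,h_1)$, $q=(q_x,h_2)$ and $\ell:y=kx+l$, and compute the $x$-coordinate of $r$. It is a ratio of linear functions in $x$ with denominator $h_1 - k(p_x-x)$ up to sign. Substituting into the second projection then shows that the resulting denominator is proportional to $(h_2-h_1)$ times a nonzero linear term.

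The main obstacle is the degenerate configurations.
\begin{itemize}
\item If $\ell$ is horizontal, then $f$ is a composition of two homotheties and is affine. We must therefore insist that the critical segment is tilted, which the figure supports.
\item Neither pivot may lie on $\ell$, and the map must be nondegenerate: $ad-bc\neq0$. This holds because both perspectivities are bijections whenever the centres are off the lines involved.
\end{itemize}
We would state these nondegeneracy conditions explicitly as properties of the construction.
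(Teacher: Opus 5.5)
Your core argument (the third paragraph together with the degeneracy bullets) is correct and follows the same route as the paper. You send $x\to\infty$ through the two central projections, observe that the image stays finite, and conclude $c\neq 0$.

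It is more accurate than the paper's write-up on two points:
\begin{itemize}
\item The paper claims $r\to(p_x,kp_x+l)$. That point is the image of $x=p_x$, where the ray through $p$ is vertical, not the limit as $x\to\infty$.
\item The paper infers that $rq$ is not parallel to the $x$-axis from distinct $x$-coordinates. Distinct $x$-coordinates only rule out a vertical line.
\end{itemize}
The correct reasons are yours: $r_\infty=\ell\cap\{y=p_y\}$, and $q$ lies at a different height from $p$. Your observation that $\ell$ must not be horizontal is also a hypothesis the paper's lemma silently needs. For $k=0$ the map is a composition of two homotheties and is affine.

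What you should fix is the opening claim that $f$ is affine exactly when $\ell\parallel L$. It is false in both directions and is contradicted by your own third paragraph.
\begin{itemize}
\item The projectivity $X\to\ell\to X$ fixes $X\cap\ell$ and $X\cap L$. So $f$ fixes $\infty_X$, and is affine, whenever $\ell$ or $L$ is horizontal.
\item Your third paragraph shows that $f(\infty_X)$ is finite whenever neither $\ell$ nor $L$ is horizontal.
\item In your coordinates the $x$-coefficient of the denominator is, up to normalization, $c=k(h_1-h_2)$. So the factor $(h_1-h_2)$ sits in the leading coefficient, not in the whole denominator as your alternative computation suggests.
\end{itemize}
Concretely, take $p=(0,1)$ and $q=(1,2)$.
\begin{itemize}
\item The line $\ell\colon y=x+\tfrac12$ is parallel to $L$, yet $f(\infty)=0$, so $f$ is not affine.
\item The line $\ell\colon y=\tfrac12$ is not parallel to $L$, yet $f(x)=\tfrac{2x-1}{3}$ is affine.
\end{itemize}
Replace the first paragraph by the criterion ``$\ell$ not horizontal and $p,q$ at different heights''. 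That is exactly what the rest of your argument uses, and with this change the proof is complete.
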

    \begin{proof}
    We assume $f(x)$ is affine towards a contradiction.
    Since $f$ is affine and maps the \boxx segment of $x$ to the \boxx segment of $y$ (by construction) and both \boxx segments have equal length, it must preserve distances.
    This rules out non-trivial scaling and reflection, so $f$ must be a translation.

    If $f(x)$ is affine, the point $f(x)$ should approach infinity as $x$ approaches infinity.
    (Clearly, $f(x)$ is not constant.)
    However, we see that as $x$ approaches infinity, $r$ approaches $(p_x, kp_x+l)$.
    As the $x$-coordinates of $r$ and $q$ are distinct, we know the line through $r$ and $q$ is not parallel to the $x$-axis, and as such, $f(x)=y$ does not approach infinity.
    This contradicts the assumption that $f(x)$ is affine.
\end{proof}

\begin{lemma}
    The curved gadget has a threshold $t(G_\texttt{curved}) = 7$, i.e., if two variables $x$ and $y$ are linked to both a $\leq$-scaling gadget and a $\geq$-scaling gadget (albeit after copying), as in \Cref{fig:curved-gadget} and there are seven \boxxes in these two gadgets combined, then $f(x)=y$ where $f$ encodes some curved constraint.
\end{lemma}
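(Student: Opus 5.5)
The proof follows the template of the scaling-gadget threshold lemma. Three things have to be checked, in this order: a counting argument showing the seven \boxxes split as intended between the $\geq$-curved gadget and the $\leq$-curved gadget; a spectator argument fixing all but one \boxx in each part; and an appeal to the projective calculation to read off the constraint.

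\textbf{Step 1: upper bound via a cover.} We cover the union of the two curved gadgets (excluding the strip that belongs to the variable gadgets) by seven fully visible regions. In the $\leq$-curved gadget the region between the pivot points lies inside the polygon, and we expect a single convex region that contains no vertex of $P$ in the relative interior of its edges. In the $\geq$-curved gadget we use six regions: one triangular pocket for each spectator $p_1,\dots,p_6$, mirroring \Cref{fig:copy-labels-reverse}, with the critical segment absorbed into one of them. By \Cref{lem:FullyVisible} and \Cref{lem:UpperBound}, at most seven \boxxes fit. So if seven are present, each fully visible region contains exactly one.

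\textbf{Step 2: fixing spectators.} Exactly as in the scaling-gadget proof, the pocket vertices $p_1,p_3,p_4,p_6$ of the $\geq$-gadget are vision minimal. The vertices $p_2,p_5$ become relative vision minimal once the central chamber is covered, by \Cref{lem:centralchamber}. By \Cref{lem:SpectatorReplacement} all six are essentially fixed. Given that the variable segments of $x$ and $y$ carry their \boxxes, the union of these visibility regions covers the whole $\geq$-gadget except its critical segment, so the seventh remaining \boxx in it must lie there. The one pivot that sits above $e_\texttt{top}$ requires checking that the figures' visibility regions still close up around it; this is the part I would verify most carefully against \Cref{fig:curved-gadget}. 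For the $\leq$-gadget, the remaining \boxx lies in the single convex region, and it is unseen by $x$ and $y$ only if it sits on the critical segment between the two rays. The ray from $x$ through $p$ meets $\ell$ at $r$, and the ray from $y$ through $q$ meets $\ell$ at $r'$.

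\textbf{Step 3: the constraint.} Using the construction in the Calculation paragraph, let $f$ be the composition of the two central projections. This $f$ is a M\"obius map, and it is monotone on the relevant interval because the variable segments lie strictly on opposite sides of the gadget and the projections do not pass through a pole there. A free point exists on the $\leq$-gadget's critical segment if and only if $r$ and $r'$ are in the correct order, which translates to $f(x)\le y$. Symmetrically, the hole in the $\geq$-gadget reverses the obstruction and yields $f(x)\ge y$. Together these give $f(x)=y$, and \Cref{lem:nonaffine} shows $f$ is curved.

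\textbf{Main obstacle.} The hardest step is the sign and monotonicity bookkeeping in the last step. One must confirm that the order of $r$ and $r'$ on $\ell$ corresponds to the inequality in the stated direction across the whole range $[-1,1]$. This requires that the pole $-d/c$ of $f$ lies outside the image of the variable segment, which I would enforce by the placement of $p$, $q$ and $\ell$.
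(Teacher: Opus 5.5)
\textbf{Gap: your count yields the scaling gadget's threshold $8$, not $7$.} Your Steps 1 and 2 contradict each other. In Step 1 you cover the $\geq$-curved gadget by six fully visible regions, one per pocket, with the critical segment lying inside one of them. That has two consequences. At most six birdhouses fit in the $\geq$-gadget. And the spectator of the pocket whose region contains the critical segment sees all of that segment. Step 2, however, fixes six spectators $p_1,\dots,p_6$, then places another birdhouse on the critical segment, which you claim is unseen, and then adds one more in the $\leq$-gadget. That is $6+1+1=8$ birdhouses in seven fully visible regions. This is the scaling-gadget lemma's bookkeeping carried over unchanged, so as written your argument does not establish the threshold $7$.

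\textbf{The fix is the point you flagged yourself.} Raising one pivot above $e_{\texttt{top}}$ changes the pocket structure. In the paper's $\geq$-curved gadget there are only five nooks. Each of their five convex vertices is vision minimal, hence an essentially-fixed spectator. That leaves exactly two birdhouses. The birdhouses on the segments of $x$ and $y$ also see a leftover region of the $\geq$-gadget that the spectators miss. After that, the only unseen parts are the critical segment of the $\geq$-gadget and a convex region of the $\leq$-gadget, so each receives exactly one birdhouse. A consistent upper-bound cover therefore consists of five spectator regions, the critical segment as its own fully visible piece (it has no vertex of $P$ in its relative interior), and the convex part of the $\leq$-gadget.

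With the count repaired, your Step 3 matches the paper: each gadget contributes one inequality, these combine to $f(x)=y$, and \Cref{lem:nonaffine} gives curvedness. Your remark about keeping the pole of $f$ away from the relevant interval is extra care that the paper leaves implicit.
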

\begin{proof}
    When a \boxx is on the critical segment of both the $\leq$-gadget and the $\geq$-gadget, $f(x)=y$ encodes a constraint of the form $f(x) = \frac{ax+b}{cx+d}$.
    By \Cref{lem:nonaffine}, we know that $f(x)=y$ must be a curved transformation.

    So, we prove that if there are seven \boxxes in these two gadgets, there must be a \boxx on both of the critical segments.
    As with the scaling gadget, we notice that in the $\geq$-gadget, the visibility region of each of the five convex vertices in the nooks are vision-minimal.
    Thus, these are essentially-fixed spectators.

    Then, there remain only two \boxxes to place.
    In the $\geq$-gadget, only the critical segment is uncovered, whereas in the $\leq$-gadget, only a convex region is uncovered.
    (Note that in the $\geq$-gadget, there is also a region uncovered by the spectators, but this region is always covered by the variable \boxxes.)
    Thus, both must contain a \boxx.

    After placing a \boxx on the critical segment in the $\geq$-gadget, we know $x \geq f(x)$.
    Per \Cref{lem:nonaffine}, the $\leq$-gadget will be completely covered by the \boxxes on the variable segments, unless $x = f(x)$, in which case a single point on the critical segment is uncovered.
    As such, there can only be seven \boxxes in these two gadgets if $x = f(x)$ and as such $t(G_\texttt{curved}) = 7$.
\end{proof}

\section{Synthesis and \ER-hardness}
\label{sub:synthesis}

In this section, we synthesize the ideas present in \Cref{sec:overview} and \Cref{sub:gadgets} to prove \Cref{thm:birds}.
We create an instance of the \nestingbirdbox problem that has a solution if and only if an arbitrary existentially quantified polynomial formula $\phi$ is true.
First, we create a central chamber with height 1 and width 1.
For each variable gadget, we choose a single point on $e_\texttt{bottom}$ to represent it, and likewise for each constraint gadget we choose a single point on $e_\texttt{top}$ to represent it.
Then, we connect each variable gadget to the constraint gadget it is involved in by a line between the respective points.
Note that for the addition and curved gadgets, each involved variable needs to be at a specific distance from the constraint gadget; we achieve this by virtue of copying variables to the desired location.

At this point, we have an arrangement of lines $L$ within the polygon.
We now consider all lines $\bar{L}$ between constraint gadgets and variable gadgets not drawn.
All these lines need to be blocked by a blocker, to ensure they do not interfere with gadgets they are not supposed to.
For each line in $\bar{L}$, we choose a point to place a blocker, such that no point is on a line in $L$ and that no two points share the same $y$-coordinate.
Note that at most a polynomial number of constraints exist, each involving a constant number of variables, so we require at most a polynomial amount of blockers.
Likewise, each point can be chosen using only singly-exponentially small coordinates requiring polynomially many bits.

As each gadget can be arbitrarily small, we now increase the width of the gadgets.
This also increases the width of the lines and thus the required width of the blockers.
However, the gadgets and vertices will only be singly-exponentially small, as the points were on a singly-exponentially small grid.

Finally, we add the spectators of the central chamber as described in \Cref{sub:spectators}, such that no spectator sees two blockers and that each blocker is covered by two spectators.
This way, each spectator will be essentially-fixed.

We note that each gadget contains a constant number of vertices and the number of gadgets needed is polynomially bounded by the size of the formula $\phi$.

Finally, we find the total threshold as follows: 
\begin{itemize}
    \item for each variable gadget, we add $t(G_\texttt{variable}) = 5$ to the global threshold; 
    \item for each copy/scaling gadget, we add $t(G_\texttt{scaling}) = 8$ to the global threshold; 
    \item for each addition gadget, we add $t(G_\texttt{addition}) = 6$ to the global threshold; 
    \item for each curved gadget, we add $t(G_\texttt{curved}) = 7$ to the global threshold; 
    \item for each blocker, we add $1$ to the global threshold for the spectator inside the blocker;
    \item for each triangular pocket on $e_\texttt{left}$, we add $1$ to the global threshold;
    \item for each gadget (variable or constraint), we add $1$ to the global threshold for a spectator to cover part of $e_\texttt{top}$ or $e_\texttt{bottom}$.
\end{itemize}
This gives us a global threshold $t$ that ensures we can place $t$ \boxxes if and only if all constraints are satisfied.
This completes the proof that the \nestingbirdbox problem is \ER-complete.
Together with \Cref{sec:membership}, it shows \Cref{thm:birds} and establishes the \ER-Completeness of the \nestingbirdbox problem.

\section*{Acknowledgments}
T.M., J.O., and M.S.~would like to express their gratitude to be invited to the amazing \href{https://ti.inf.ethz.ch/ew/workshops/gwop24/index.html}{GWOP} workshop in Pura, Switzerland, from 3rd to 7th of June 2024, where this work was initiated.

\newpage
\bibliographystyle{alphaurl}
\bibliography{library, ERlibrary}

\end{document}